\newcolumntype{C}[1]{>{\centering}p{#1}}
\newtheorem{theorem}{\textbf{Theorem}}
\newtheorem{proposition}{\textbf{Proposition}}
\newtheorem{lemma}{\textbf{Lemma}}
\renewcommand{\eqref}[1]{Eq. (\ref{#1})}
\begin{document}

\preprint{AIP/123-QED}

\title{Exact first passage time distribution for  nonlinear chemical reaction networks \\ II: monomolecular reactions and a $A + B \rightarrow C$ type of second-order reaction with arbitrary initial conditions}

\author{Changqian Rao}
 \affiliation{School of Mathematical Sciences, Fudan University, Shanghai, 200433, China}
 \affiliation{ 
\mbox{Research Institute of Intelligent Complex Systems, Fudan University, Shanghai, 200433, China}
}

\author{David Waxman}
 \affiliation{ 
\mbox{Institute of Science and Technology for Brain-Inspired Intelligence, Fudan University, Shanghai, 200433, China}
}
 \affiliation{ 
Key Laboratory of Computational Neuroscience and Brain-Inspired Intelligence (Fudan University), Ministry of Education, China
}
 \affiliation{ 
\mbox{MOE Frontiers Center for Brain Science, Fudan University, Shanghai 200433, China}
}
 \affiliation{ 
Zhangjiang Fudan International Innovation Center, Shanghai, China
}

 \author{Wei Lin
}
 \affiliation{School of Mathematical Sciences, Fudan University, Shanghai, 200433, China}
 \affiliation{ 
\mbox{Research Institute of Intelligent Complex Systems, Fudan University, Shanghai, 200433, China}
}
 \affiliation{ 
\mbox{Institute of Science and Technology for Brain-Inspired Intelligence, Fudan University, Shanghai, 200433, China}
}
 \author{Zhuoyi Song
}
 \email[E-mail me at: ]{songzhuoyi@fudan.edu.cn}
\affiliation{ 
Key Laboratory of Computational Neuroscience and Brain-Inspired Intelligence (Fudan University), Ministry of Education, China
}
 \affiliation{ 
\mbox{MOE Frontiers Center for Brain Science, Fudan University, Shanghai 200433, China}
}
 \affiliation{ 
Zhangjiang Fudan International Innovation Center, Shanghai, China
}

\date{\today}

\begin{abstract}

In biochemical reaction networks, the first passage time (FPT) of a reaction quantifies the time it takes for the reaction to first occur, from the initial state. While the mean FPT historically served as a summary metric, a far more comprehensive characterization of the dynamics of the network is contained within the complete FPT distribution. The relatively uncommon theoretical treatments of the FPT distribution that have been given in the past have been  confined to linear systems, with zero and first-order processes. Recently, we presented theoretically exact solutions for the FPT distribution, within nonlinear systems involving two-particle collisions, such as $A + B \rightarrow C$. Although this research yielded invaluable results, it was based upon the assumption of initial conditions in the form of a Poisson distribution.
This somewhat restricts its relevance to real-world biochemical systems, which frequently display intricate behaviour and initial conditions that are non-Poisson in nature. Our current study extends prior analyses to accommodate \textit{arbitrary initial conditions}, thereby expanding the applicability of our theoretical framework and providing a more adaptable tool for capturing the dynamics of biochemical reaction networks.

\end{abstract}

\keywords{second-order chemical reactions; biochemical networks; chemical master equations, exact solutions; arbitrary initial conditions,nonlinear chemical dynamics}

\maketitle

\section{Introduction}

In biochemical networks, the first passage time (FPT) refers to the time it takes for an event to take place or for a state to reach a specific threshold for the first time. The FPT is often defined in distinct biological contexts, such as the time it takes for gene expression after a stimulus \cite{ghusinga2017first}, the time required for a signaling pathway to produce a cellular response in signal transduction \cite{huang2021relating}, or the time until a substrate is converted into a product, in enzyme kinetics \cite{polizzi2016mean}. 

The concept of FPT is particularly important when the biochemical networks are confined to a small subcellular domain, where the number of molecules and proteins is low. In these cases, reactions and interactions among molecules are stochastic rather than deterministic in character. As a result, discreteness and stochasticity are inherent properties. Indeed, stochastic processes are widely used as a modeling methodologies to describe the dynamics of biochemical networks \cite{gillespie2013perspective}. 

Analyzing the full FPT distribution of a stochastic process, rather than the mean or global FPT, can expose
much more information about the timing, efficiency, and reliability of the underlying biochemical reaction system \cite{ham2024stochastic,polizzi2016mean}. The full FPT distribution provides detailed insights into the underlying chemical kinetics and the regulatory mechanisms that operate \cite{huang2021relating}. 

Obtaining the full FPT distribution of a biochemical network typically involves solving the underlying \textit{chemical master equation} (CME), when subject to appropriate boundary conditions \cite{iyer2016first}. 
The CME describes the time evolution of the probabilities of each possible state of the system. Theoretical treatments of the CME are rare \cite{gillespie2013perspective}, and time-dependent CME solutions are typically estimated by simulations, based on kinetic Monte Carlo methods \cite{cao2006efficient}, such as the Gillespie algorithm \cite{gillespie2013perspective}, or approximated with moment closure schemes \cite{smadbeck2013closure} or or mean-field approximations \cite{cao_linear_2018}. 

Both simulated and approximate results show that initial conditions significantly influence the shape of the FPT distribution \cite{woods2024analysis,nyberg2016simple}. The initial conditions affect the system's proximity 
to the target state. Thus when a system starts further away from the target (as seen in chemical reactions, diffusion processes, or population dynamics), the mean FPT tends to be longer, reflecting the greater time required to reach the desired state \cite{woods2024analysis}. Importantly, it is not just mean values associated with
the initial state that matter, but also the distribution itself. 

The distribution of initial states determines the starting energy or state configuration of the system, influencing how readily the system can transit to other states and resulting in different FPTs \cite{nyberg2016simple}. More intuitively, if the initial distribution is broad, the system may be able to explore a wider range of paths, leading to distinct phases or system behaviors, resulting in multimodal or intrinsically complex FPT distributions \cite{woods2024analysis}. 

These complexities indicate how a system's initial configuration intricately shapes the subsequent dynamics and characteristics of the FPT. This raises a question of the underlying theory: how do initial conditions influence 
the FPT distribution? 

To the best of our knowledge, this question has only been addressed in linear systems where general time-dependent exact solutions of the CME are available. Such systems include biochemical networks with monomolecular reactions \cite{jahnke2007solving} or, more recently, other first-order reactions \cite{vastola2021solving}. 

However, widely presented bimolecular reactions, which serve as fundamental building blocks in biochemical reaction networks, are governed by highly nonlinear models, making exact solutions to the CME difficult to obtain \cite{gillespie2013perspective}. As a result, there are no general conclusions on how initial conditions affect the FPT distribution in nonlinear biochemical reaction networks.

In the limited body of theoretical work that provides exact FPT distributions for specific classes of biochemical networks with second-order reactions \cite{anderson_time-dependent_2020}, initial conditions are often subject to restrictions. Anderson et al. developed a general CME solution for networks with second-order reaction complexes, providing: (i) the system began with an initial distribution of  Poisson-product form, and (ii) which then retained
this form over its time evolution.

Previously, we derived an exact FPT distribution for biochemical networks involving a single $A + A \rightarrow C$ reaction, characterized by collisions between identical particles, providing the numbers of all molecular types were initially all zero \cite{rao2023analytical}. More recently, we obtained an exact FPT distribution for a biochemical network involving an $A + B \rightarrow C$ type of second-order reaction occurring downstream of a series of monomolecular reactions \cite{rao2024exact}. While this allowed for analysis of systems with interactions between different particles, the initial conditions were still limited to distributions of Poisson-product form \cite{rao2024exact}.

The present paper seeks to broaden our exact solution/analysis of the FPT distribution, based on initial
conditions with a Poisson-product form \cite{rao2024exact}, and generalizing it to arbitrary initial conditions. 
We note that it is essential to focus solely on discrete distributions, as molecular counts in stochastic biochemical systems are inherently discrete.

In particular, we demonstrate that our previous theorem can be extended to cover arbitrary initial discrete distributions. 
This follows from two facts: (i) a single delta distribution can be represented as a weighted sum of Poisson distributions, (ii) an arbitrary (discrete) distribution can be represented as a weighted sum over single delta distributions. As a consequence of (i) and (ii), an arbitrary discrete distribution can be expressed as a weighted sum of  distributions of Poisson product form (see Appendix \ref{Appendix:B} for further details). 
This approach thus  allows our previous theorem to be extended to arbitrary initial discrete distributions
that are distinct from those of Poisson product form.

Our findings demonstrate that changes in initial conditions can lead to significant variation in 
the FPT distribution. {\color{black} Additionally, we establish that the FPT distribution exhibits linearity with respect to initial conditions for the systems studied, indicating that shifts in the initial condition result in proportionate
shifts in the FPT distributions}. Collectively, these insights emphasize the importance of conducting theoretical analyses of FPT distributions under arbitrary initial conditions.

\section{Problem formulation}

We examine a well-mixed biochemical reaction system consisting of N+1 species, denoted by the column 
vector $\bm{S}^*=[S_0, S_1,S_2,\dots S_N]^{\top}$, undergoing $M+1$ reactions $R_m$ where $m=0,1,2,3,....,M$. For $m\geq 1$, each reaction ($R_m$) is monomolecular, and  does not involve the species $S_{0}$. These reactions can be categorized into one of the following three types: 
\begin{equation}
\begin{split}\label{Cas0}
\begin{split}
   S_{j}&\overset{a_{m}(t)}{\rightarrow} S_{k} \mbox{ conversion }(j\neq k,j\neq 0,k\neq 0); \\
*&\overset{a_{m}(t)}{\rightarrow} S_{k} \mbox{ production from source or inflow }(k\neq 0); \\
S_{j}&\overset{a_{m}(t)}{\rightarrow} * \mbox{  degradation or outflow }(j\neq 0).   
\end{split}
\end{split}
\end{equation}
where $a_{m}(t)$ for $m=0,1,\dots,M$ denote the reaction rate `constants' of the $m$th reaction. The $a_{m}(t)$ are assumed to be nonnegative and they can, potentially be time-dependent, and hence vary with $t$. 

The reaction $R_{0}$ is a second-order reaction of the form 
\begin{equation}
\begin{split}\label{Cas1}
S_1+S_2\overset{a_{0}(t)}{\rightarrow}&S_0
\end{split}
\end{equation}
where $S_1$ and $S_2$ are reactant species that interact to produce $S_0$ with a time-dependent reaction rate $a_0(t)$. This type of reaction involves the collision of two different species and is non-linear in nature, distinguishing it from the monomolecular reactions ($R_m$ for $m\geq 1$).

This is a nonlinear biochemical network that is composed of a second-order reaction of the type $A + B \rightarrow C$, following a series of monomolecular reactions. Unlike general first-order reactions, this system excludes (auto)catalytic reactions of the form $S_j \rightarrow S_j + S_k$, nor splitting reactions $S_j \rightarrow S_l + S_k$ (where $j \neq l,k$). 

We can rewrite the system of reactions in the more compact way: 

\begin{equation}
\begin{split}\label{apb_reac}
\bm{Y}_m\cdot \bm{S}^*\overset{a_{m}(t)}{\rightarrow}& \bm{Y'}_m\cdot \bm{S}^*,\\
S_1+S_2\overset{a_{0}(t)}{\rightarrow}&S_0.
\end{split}
\end{equation}
where $\bm{Y}_m$ and $\bm{Y}_m'$ are the reactant/product coefficient of reaction $R_m$. For each monomolecular reaction, the sum of the absolute values of all components of $\bm{Y}_m$, and $\bm{Y}_m'$ must not exceed one, which we express as $\|\bm{Y}_m\|_1\leq 1$ and $\| \bm{Y}'_m\|_1\leq 1$, respectively.

To derive the exact FPT distribution for the second-order reaction, we introduce an auxiliary species $S_0$ to track the timing of the second-order reaction \cite{iyer2016first}. Let the count of  $S_0$ molecules be denoted by $x_0 \equiv x_0(t)$, where  $x_0$ is a non-decreasing function of time $t$. 

If the second-order reaction has not occurred by time $t$, then 
$x_0(t) = 0$, indicating that the FPT is greater than $t$. Thus, the complementary cumulative distribution function (CCDF) of the FPT is given by:
\begin{equation}
\begin{split}\label{fpt_aux}
{\rm P}(FPT>t) = P(x_0(t) = 0). \\
\end{split}
\end{equation}

Let $\bm{X}^*(t)$ represent the state vector that counts the number of molecules of $\bm{S}^*$ at time $t$. In our previous work, we derived an exact theoretical FPT distribution for system \eqref{apb_reac} under the condition that the initial state, $\bm{X}^* (0)$, follows a Poisson product form:

\begin{equation}
\begin{split}\label{sde_ini}
{\rm P}(\bm{X}^*(0)=\bm{x}^*)&=\prod_{i=0}^N \frac{ \theta_{i}^{x_i}}{x_i !} \exp(-\theta_{i}),
\end{split}
\end{equation}
where $\bm{x}^*=[x_0,x_1,\dots,x_N]^{\top}$ denotes a specific system state, $l_{0}=0$ and $\bm{\theta}=[\theta_{1},\dots,\theta_{N}]^\top$ is a vector of constants. 

In the present paper, we address the problem of extending the exact FPT distribution to accommodate \textit{arbitrary initial conditions}. In the subsequent sections, we first provide background on our earlier work, which offers the exact solution to \eqref{general_cme_element} for systems with initial conditions that follow a Poisson product form. Then, in the results section, we present the revised formulas for delta initial conditions and for any arbitrary initial conditions.

\section{Background}

Lemma 1 presents our earlier theorem for the exact solution of \eqref{general_cme_element} when the system's initial conditions conform to the Poisson-product form. In this paper, we use the following variables and notation:

\begin{enumerate}
     \item 
    $\bm{X}^*= [X_0,\bm{X}^\top]^\top$
    denotes the complete state vector of the system, representing the numbers of different species of $\bm{S}^*$.
    The quantity $\bm{X}=[X_{1},\cdots,X_{N}]^{\top}$ denotes the numbers of different species of $\bm{S}=[S_1,S_2,\dots,S_N]^\top$.

\item $\bm{Y}_m=\left[
         y_{m,0},
         \bm{y}_m 
    \right]$ and $\bm{Y'}_m=\left[
         y'_{m,0},  
         \bm{y'}_m 
    \right]$ are stoichiometric coefficient vectors of the $m$'th reaction in system \eqref{apb_reac}, and $\bm{y}_m=[y_{m,1},y_{m,2},\dots,y_{m,N}]$, $\bm{y'}_m=[y'_{m,1},y'_{m,2},\dots,y'_{m,N}]$ denote parts of stoichiometric coefficient vectors, excluding $y_{m,0}$ and $y'_{m,0}$ for the auxiliary species $S_0$.

\item $a_{0}(t)$ denotes the rate `constant' at time $t$ of the second-order reaction,
while $a_{m}(t)$ ($m=1,\dots,M$) denotes the rate `constant' at time $t$ of the $m$'th reaction
 (which is either a zero or first-order reaction).

\end{enumerate}

The CMEs for the system given in \eqref{apb_reac} correspond to a differential equation for the probability distribution of $\bm{x}^*$ at time $t$ \cite{gillespie2013perspective}: 
\begin{small}
\begin{equation}
\label{general_cme_element}
\displaystyle\frac{{\rm d} P(\bm{x}^*, t )}{{\rm d} t} \! =\!\sum\limits_{k=0}^{M}\left[ P\left(\bm{x}^*-\bm{v}_k, t \right) c_k\left(\bm{x}^*-\bm{v}_k,t\right)\!-\! P(\bm{x}^*, t ) c_k(\bm{x}^*,t)\right],
\end{equation}
\end{small}
where $\bm{v}_k$ is the transition vector for the $k$'th reaction, and $c_k(\bm{x}^*,t)$ is the propensity function for the $k$'th reaction, i.e., the probability that the $k$'th reaction occurs in state $\bm{x}^*$, at time $t$. The 
quantity $c_k(\bm{x}^*,t)$ equals $a_k(t)(\bm{x}^*)^{\bm{Y}_k}$, where, for any vectors with $d$ components, a vector to the power of a
vector is defined by $\mathbf{u}^{\mathbf{v}}\overset{\rm def}{=}\prod^d_{i=1}u_i^{v_i}$ and we adopt the convention that $0^0=1$. Therefore, for $k=0$, the reaction is given in \eqref{apb_reac}, thus, $c_0(\bm{x}^*,t)=a_0(t)x_1x_2$. Because the first $M$ reactions ($k=1,\dots, M$) are monomolecular reactions, $c_k(\bm{x}^*,t)$, $k=1,\dots, M$ are linear functions of states.

Lemma 1 was given in terms of the following variables/notations: 

\begin{enumerate}[label=(\roman*)]

\item $\boldsymbol{\lambda}(t)=[\lambda_{1}(t),\dots
,\lambda_{N}(t)]^\top$ is a column vector containing the mean of all numbers of the species in $\bm{S}$ at
any time $t$ ($t\geq0$), and $\boldsymbol{\Lambda}(t)= \left[
     \lambda_{S}(t)  ,
     \boldsymbol{\lambda}(t) 
  \right]$ is a column vector containing the mean of all numbers of the species in $\bm{S^*}$ at
any time $t$ ($t\geq0$).

\item 
$\boldsymbol{\Lambda}(0)= \left[
     \lambda_{S}(0)  ,
     \boldsymbol{\lambda}(0) 
 \right]^\top$denotes the initial mean number of each species. 

\item $\mathcal{M}_{1}$ denotes an $N\times N$ matrix with components
$\eta_{i,j}^{1}$, and $\mathcal{M}_{2}$ denotes an $N\times1$ vector with
components $\eta_{i}^{2}$, where the components are given by:
\begin{equation}%
\eta_{i,j}^{1}=\sum_{m:\bm{y}_{m}=e_{i}}a_{m}(t)(y_{m,j}^{\prime}%
-y_{m,j}),\quad\eta_{i}^{2}=\sum_{m:\bm{y}_{m}=\mathbf{0}}a_{m}%
(t)(y_{m,i}^{\prime}-y_{m,i}),
\end{equation}
in which $e_i$ is a vector where only element $i$ is $1$, and all other elements
are $0$, while $y_{m,i}^{\prime}$ and $y_{m,i}$ are the $i$'th component
of $\bm{y}_{m}^{\prime}$ and $\bm{y}_{m}$, respectively,

\item $\mathcal{N}_{1}$ and $\mathcal{N}_{2}$ denote $N\times N$ matrices with all elements zero, except the upper left $2\times2$ block, with
\begin{equation}\label{n1n2_def}
\mathcal{N}_{1}=\left(
\begin{array}
[c]{ccc}%
a_{S} & 0 & \cdots\\
0 & -a_{S} & \cdots\\
\vdots & \vdots & \ddots
\end{array}
\right)  ,\quad\mathcal{N}_{2}=\left(
\begin{array}
[c]{ccc}%
\mathrm{i}a_{S} & 0 & \cdots\\
0 & \mathrm{i}a_{S} & \cdots\\
\vdots & \vdots & \ddots
\end{array}
\right),
\end{equation}
where 
$a_S=\sqrt{2a_0}/2$. 

\item $\lambda_{S}(t)$ and $\boldsymbol{\lambda}(t)=[\lambda_{1}(t),\dots
,\lambda_{N}(t)]^\top$ are a stochastic variable and a column stochastic vector, respectively, obey the stochastic 
differential equations (SDEs): 
\begin{equation}
\begin{split}\label{sde_formu1}
{\rm d}\boldsymbol{\lambda}&=\left(\mathcal{M}_1\boldsymbol{\lambda}+\mathcal{M}_2\right){\rm d}t + \mathcal{N}_1\boldsymbol{\lambda}{\rm d}W^1_t+\mathcal{N}_2\boldsymbol{\lambda}{\rm d}W^2_t,\\
{\rm d}\lambda_{S}&= \left( a_S \lambda_1-a_S \lambda_2 \right){\rm d}W^1_t + \left( {\rm i}a_S \lambda_1+{\rm i}a_S \lambda_2 \right){\rm d}W^2_t,
\end{split}
\end{equation}
subject to initial condition $ \boldsymbol{\lambda}(0)=\bm{\theta}$, $ \lambda_{S}(0)=0$

\end{enumerate}

\vspace{0.5em}

\begin{lemma}
\label{theo1}

For the system in \eqref{apb_reac}, the complementary cumulative probability distribution of the FPT
equals the probability of occurrence of states with $x_0=0$:
\begin{equation}\label{eq_coro1}
\begin{split}
P(FPT>t)&= \sum_{\bm{x}^*|x_0=0}P(\bm{x}^*,t)
=\left<\exp(\lambda_{S})\right>,
\end{split}
\end{equation}
where $<...>$ denotes an expectation operation.

\end{lemma}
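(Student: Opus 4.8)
The plan is to pass to a probability generating function, turn the chemical master equation \eqref{general_cme_element} into a linear partial differential equation, and then recognize the stochastic differential equations \eqref{sde_formu1} as a Feynman--Kac-type representation of its solution. Since $x_0$ is non-decreasing with $x_0(0)=0$, the event $\{x_0(t)=0\}$ in \eqref{fpt_aux} is exactly the event that the second-order reaction $R_0$ has never fired up to time $t$. I would therefore follow the sub-probability $\tilde P(\bm{x},t)$ of occupying state $\bm{x}=[x_1,\dots,x_N]^\top$ with $x_0=0$; because $x_0$ can only increase, summing \eqref{general_cme_element} over the $x_0=0$ sector shows that $\tilde P$ obeys the monomolecular master equation supplemented by the loss term $-a_0(t)x_1x_2\,\tilde P(\bm{x},t)$, which removes probability whenever $R_0$ fires. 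By \eqref{fpt_aux}, $P(FPT>t)=\sum_{\bm{x}}\tilde P(\bm{x},t)$.

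First I would introduce the generating function $\tilde G(\bm{z},t)=\sum_{\bm{x}}\tilde P(\bm{x},t)\prod_{i=1}^N z_i^{x_i}$, so that $P(FPT>t)=\tilde G(\bm{1},t)$. Multiplying the equation for $\tilde P$ by $\prod_i z_i^{x_i}$ and summing, the monomolecular reactions contribute a first-order linear differential operator in the $z_i$, whereas the loss term contributes the single second-order piece $-a_0(t)\,z_1z_2\,\partial_{z_1}\partial_{z_2}\tilde G$. The resulting evolution equation is linear in $\tilde G$, and the Poisson-product initial data \eqref{sde_ini} together with $\lambda_S(0)=0$ give $\tilde G(\bm{z},0)=\exp\!\big(\sum_{i=1}^N\theta_i(z_i-1)\big)$.

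Next I would verify the ansatz $H(\bm{z},t)=\big\langle\exp\!\big(\lambda_S(t)+\sum_{i=1}^N\lambda_i(t)(z_i-1)\big)\big\rangle$, with $\lambda_S,\bm{\lambda}$ evolving under \eqref{sde_formu1}. Writing $\Psi=\lambda_S+\sum_i\lambda_i(z_i-1)$ and applying Itô's lemma to $e^\Psi$, the drift $\mathcal{M}_1\bm{\lambda}+\mathcal{M}_2$ should reproduce the linear monomolecular operator — this is exactly what the entries $\eta^1_{i,j},\eta^2_i$ are built to encode — while the Itô correction $\tfrac12(\mathrm{d}\Psi)^2$ must reproduce the second-order term $-a_0 z_1z_2\,\partial_{z_1}\partial_{z_2}H=-a_0 z_1z_2\langle\lambda_1\lambda_2 e^\Psi\rangle$. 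Collecting the $\mathrm{d}W^1_t$ and $\mathrm{d}W^2_t$ coefficients of $\mathrm{d}\Psi$ gives $a_S(\lambda_1z_1-\lambda_2z_2)$ and $\mathrm{i}a_S(\lambda_1z_1+\lambda_2z_2)$; since $W^1_t$ and $W^2_t$ are independent their squares add, and the factor $\mathrm{i}$ in $\mathcal{N}_2$ reverses the sign of the second square, so the unwanted $z_1^2$ and $z_2^2$ terms cancel and only the cross term $-4a_S^2\lambda_1\lambda_2 z_1z_2=-2a_0\lambda_1\lambda_2 z_1z_2$ remains (using $a_S^2=a_0/2$). The prefactor $\tfrac12$ then yields precisely the required second-order term, so $H$ solves the same equation as $\tilde G$; since $H(\bm{z},0)=\exp(\sum_i\theta_i(z_i-1))=\tilde G(\bm{z},0)$, uniqueness for the linear equation forces $H=\tilde G$, and setting $\bm{z}=\bm{1}$ gives $P(FPT>t)=\langle\exp(\lambda_S)\rangle$.

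The main obstacle is this Itô-correction calculation: the deliberately complex-valued noise in \eqref{sde_formu1} must convert the nonlinear killing term into a quadratic variation that contains no spurious $z_1^2$ or $z_2^2$ contributions, and this cancellation relies entirely on the factor $\mathrm{i}$ in $\mathcal{N}_2$ together with the scaling $a_S=\sqrt{2a_0}/2$. A secondary, more technical layer is establishing well-posedness of the complex multiplicative-noise SDEs, the finiteness of $\langle\exp(\lambda_S)\rangle$ — consistent with a survival probability only because $\lambda_S$ is complex, so that its exponential expectation can remain at or below one despite $\lambda_S$ being driftless — and the growth/uniqueness conditions needed to identify $H$ with $\tilde G$ and to justify differentiating under the expectation.
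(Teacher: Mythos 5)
Your reconstruction is correct and is essentially the argument underlying Lemma \ref{theo1}: the paper itself does not reprove this lemma (it is imported as background from the authors' earlier work \cite{rao2024exact}), but the generating-function reduction of the $x_0=0$ sector to a killed monomolecular equation, followed by the It\^o/Feynman--Kac verification in which the complex noise $\mathcal{N}_2$ cancels the $z_1^2$ and $z_2^2$ terms and the scaling $a_S=\sqrt{2a_0}/2$ produces exactly $-a_0 z_1 z_2\,\partial_{z_1}\partial_{z_2}$, is precisely the mechanism the definitions of $\mathcal{M}_1,\mathcal{M}_2,\mathcal{N}_1,\mathcal{N}_2$ in \eqref{sde_formu1} are engineered for. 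The only caveats are the technical ones you already flag (well-posedness and moment bounds for the complex linear SDE, and uniqueness for the generating-function PDE), which the paper likewise leaves to the cited reference.
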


\vspace{0.5em}

\section{Results}

In Section A, we start by examining the case of delta initial conditions, which correspond to a system initialized in specific states. Section B gives details of a numerical method for calculating the theoretical FPT formula. Section C extends the analysis to systems with arbitrary non-Poisson discrete initial distributions, treating these as combinations of discrete delta distributions. In Section D, we verify our theoretical predictions by comparison with simulated FPT distributions that were  obtained via the Gillespie algorithm. 

\subsection{Exact representation for the auxiliary chemical master equation with delta initial conditions}

We begin by proving that any non-Poisson discrete distribution can be expressed as a weighted sum of different Poisson distributions. Consequently, an $N$-dimensional delta distribution of the state $\bm{x}$ can be represented as a weighted sum of distributions in Poisson-product form. Using this, we extend our previous theoretical analysis of FPT distributions from Poisson-product form initial conditions to arbitrary initial conditions. 

We define an $N$-dimensional delta distribution, $\delta_{\bm{Z}_0}$, over the $N$-dimensional set of natural numbers $\mathbb{N}^{N}$ as follows For any arbitrary $\bm{x} = [x_1, x_2, \dots, x_N]^\top \in \mathbb{N}^{N}$, $\delta_{\bm{Z}_0}(\bm{x}) = 1$ if and only if $\bm{x} = \bm{Z}_0$, otherwise, $\delta_{\bm{Z}_0}(\bm{x}) = 0$.

\begin{proposition}\label{prop:delta_poisson}
Any $N$-dimensional delta distribution, $\delta_{\bm{Z}_0}$, can be represented as a weighted sum of distributions of the Poisson-product form (refer to Appendix \ref{Appendix:B} for proof):
\begin{equation}\label{delta_poisson}
\begin{split}
\delta_{\bm{Z}_0}(\bm{x})
&=\sum_{k_1,k_2,\dots,k_N =0}^{\infty} c_{\bm{k}}\prod_{j=1}^N \frac{(\theta^{(k_j)}_j)^{x_j}}{x_j!}\exp(-\theta^{(k_j)}_j ),
\end{split}
\end{equation}
where $\bm{k}=[k_1,\dots,k_N]^\top$ is the index vector; $c_{\bm{k}}$ is the weight coefficient; and $\theta^{(k_j)}_j$ is  {\color{black}the parameter of the $k\operatorname{th}$} distribution component within the Poisson-product-form.
\end{proposition}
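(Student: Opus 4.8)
The plan is to reduce the $N$-dimensional claim to a one-dimensional statement and then reassemble by taking products. First I would exploit the fact that the delta distribution factorizes over coordinates: writing $\bm{Z}_0=[z_1,\dots,z_N]^\top$, one has $\delta_{\bm{Z}_0}(\bm{x})=\prod_{j=1}^N \delta_{z_j}(x_j)$, where each $\delta_{z_j}$ is the one-dimensional Kronecker delta on $\mathbb{N}$. Hence it suffices to prove the scalar representation
\begin{equation}
\delta_{z}(x)=\sum_{k=0}^{\infty}c^{(z)}_{k}\,\frac{(\theta^{(k)})^{x}}{x!}\exp\!\big(-\theta^{(k)}\big),\qquad x\in\mathbb{N},
\end{equation}
for each fixed $z\in\mathbb{N}$. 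Substituting this scalar expansion into every factor of the product and multiplying out the (infinite) sums would then yield exactly \eqref{delta_poisson}, with the multi-index weight identified as the product $c_{\bm{k}}=\prod_{j=1}^N c^{(z_j)}_{k_j}$ and the parameters $\theta^{(k_j)}_j$ inherited coordinatewise.

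To establish the scalar identity I would pass to probability generating functions. Multiplying both sides by $s^{x}$ and summing over $x$ turns the target into the single functional equation $\sum_{k}c^{(z)}_{k}\exp\!\big(\theta^{(k)}(s-1)\big)=s^{z}$, i.e.\ the requirement that the weighted family of Poisson generating functions reproduce the monomial $s^{z}$. Setting $u=s-1$, this is the statement that the signed measure $\sum_k c^{(z)}_k\,\delta_{\theta^{(k)}}$ on the parameter axis has (bilateral) Laplace transform equal to the polynomial $(1+u)^{z}$. The guiding identity is that differentiation in the Poisson parameter isolates a single state, namely $\partial_\theta^{\,z}\big[\theta^{x}/x!\big]\big|_{\theta=0}=\delta_{z}(x)$; since $\theta^{x}/x!=e^{\theta}\,\theta^x e^{-\theta}/x!$, the delta is formally a $z$-th derivative of the Poisson family at $\theta=0$. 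I would then realize this derivative as a genuine convergent series in point evaluations of the Poisson law, choosing an explicit countable set of parameters $\{\theta^{(k)}\}$ together with weights $\{c^{(z)}_k\}$ and verifying the generating-function identity term by term.

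The main obstacle is that no \emph{finite} weighted sum can work: a finite combination of Poisson generating functions with distinct parameters is a nontrivial combination of exponentials $e^{\theta^{(k)}(s-1)}$, which is linearly independent from the polynomials and so can never equal $s^{z}$. The representation is therefore irreducibly an infinite series, and the crux of the proof is to (a) exhibit an explicit family $\{(\theta^{(k)},c^{(z)}_{k})\}$ whose series converges pointwise in $x$ to $\delta_z(x)$, and (b) justify the interchange of the $N$ infinite sums when the scalar expansions are multiplied together to form the product in \eqref{delta_poisson}. A useful consistency check that I would carry along is normalization: evaluating the functional equation at $s=1$ (equivalently $u=0$) forces $\sum_k c^{(z)}_k=1$, so that $\sum_{\bm{k}}c_{\bm{k}}=\prod_{j}\big(\sum_{k_j}c^{(z_j)}_{k_j}\big)=1$, consistent with $\delta_{\bm{Z}_0}$ being a probability distribution; absolute convergence of the weights would simultaneously legitimize the rearrangement needed in step (b).
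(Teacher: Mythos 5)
Your reduction to the one-dimensional case via $\delta_{\bm{Z}_0}(\bm{x})=\prod_j\delta_{z_j}(x_j)$ and the product ansatz $c_{\bm{k}}=\prod_j c^{(z_j)}_{k_j}$ matches the paper exactly, and your generating-function reformulation $\sum_k c^{(z)}_k\exp\bigl(\theta^{(k)}(s-1)\bigr)=s^z$ is a correct and illuminating restatement of what must be shown, as is your observation that no finite combination can work. However, the proposal stops precisely at the point where the actual mathematical content lies: you never establish that a convergent family $\{(\theta^{(k)},c^{(z)}_k)\}$ exists. Saying you would ``realize this derivative as a genuine convergent series in point evaluations'' and ``verify the identity term by term'' is a restatement of the claim, not a proof of it. The formal identity $\partial_\theta^{z}[\theta^x/x!]\big|_{\theta=0}=\delta_z(x)$ does not by itself yield such a series: finite-difference approximations of the $z$-th derivative give only finite combinations, which you have yourself argued cannot equal $s^z$, and the passage to the limit does not produce a single infinite weighted sum of Poisson laws without further argument. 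So the existence question — whether the monomial $s^z$ lies in the (pointwise) span of the exponentials $e^{\theta(s-1)}$ over a countable parameter set — is left open, and that is the entire difficulty.

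The paper closes exactly this gap by a different device: it fixes $\theta^{(k)}=k+1$, writes the requirement $\delta_z(x)=\sum_k c_k\,(\theta^{(k)})^x e^{-\theta^{(k)}}/x!$ as an infinite system of linear equations in the unknowns $y_k=c_k e^{-(k+1)}$ with coefficient array $a_{i,j}=j^{\,i-1}/(i-1)!$, and invokes P\'olya's theorem on infinite linear systems. The two hypotheses of that theorem are checked concretely: full rank of every $n\times n$ block via a Vandermonde factorization, and the ratio condition $\lim_{j\to\infty}a_{i-1,j}/a_{ij}=\lim_{j\to\infty}(i-1)/j=0$; the theorem then delivers a solution with all series absolutely convergent, which also legitimizes the product rearrangement you flag in your step (b). If you want to complete your proof along your own lines, you would need an analogous existence mechanism (e.g.\ an explicit biorthogonal construction or a M\"untz-type density argument for the exponentials $e^{\theta(s-1)}$); as written, the central step is asserted rather than proved. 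One further caution: your normalization check $\sum_k c^{(z)}_k=1$ is correct, but note the weights are necessarily of mixed sign, so the expansion is a signed combination and absolute convergence must be argued, not assumed.
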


In reality, there is a linear relationship between the FPT distribution and the initial distributions of the states. This can be attributed to two key factors: 1) Equation \eqref{eq_cme} shows that the FPT distribution is a linear combination of the distributions of all states when the non-decreasing state, $x_0$, is set to zero.; and 2) The CME presented in \eqref{general_cme_element} is a homogeneous linear ordinary differential equation (ODE):

\begin{equation}\label{eq_cme}
\begin{split}
P(FPT>t)&= \sum_{\bm{x}^*|x_0=0}P(\bm{x}^*,t).
\end{split}
\end{equation}

As a consequence of this linear mapping between FPT distributions and the system's initial conditions, the FPT distribution for arbitrary initial conditions can be expressed as a sum of the FPT distributions for the system in \eqref{apb_reac}, where the initial conditions follow the Poisson-product form. This leads us to the result in Theorem 1.

\begin{theorem}\label{theo2}
For system in \eqref{apb_reac}, when the initial state is a fixed value $\textbf{Z}_0=[z_{1},z_{2},\dots,z_{N}]$, i.e., the initial distribution is $\delta_{Z_0}$, the FPT distribution of the second-order reaction is: 
\begin{equation}
\begin{split}\label{eq_coro2}
P(FPT>t)&=\left(\nabla_{\boldsymbol{\theta}} +1\right)^{\textbf{Z}_0}\left<\exp(\lambda_S(t;\boldsymbol{\theta}))\right>|_{\boldsymbol{\theta}=\textbf{0}}.
\end{split}
\end{equation}
where $\nabla_{\boldsymbol{\theta}}$ is the gradient operator with respect to a vector of variables $\boldsymbol{\theta}=[\theta_1,\dots,\theta_N]$, and $\lambda_S(t;\boldsymbol{\theta} )$ is the solution of \eqref{sde_formu1}, given $\lambda_S(0)=0$ and
$\boldsymbol{\lambda}(0)=\boldsymbol{\theta}$.
\end{theorem}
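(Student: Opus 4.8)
The plan is to combine three ingredients that are already in hand: the linearity of the CME \eqref{general_cme_element} and of the FPT functional \eqref{eq_cme} in the initial distribution, the exact Poisson-product result of Lemma \ref{theo1}, and an exact representation of the delta initial condition as a differential operator acting on a Poisson product. Write $\Phi$ for the linear map sending an initial distribution on $\mathbb{N}^N$ to the number $P(FPT>t)$; Lemma \ref{theo1} states precisely that $\Phi$ evaluated on the Poisson product with mean vector $\boldsymbol{\theta}$ returns $\langle\exp(\lambda_S(t;\boldsymbol{\theta}))\rangle$. The goal is therefore to express $\delta_{\mathbf{Z}_0}$ through Poisson products and then push that representation through $\Phi$.

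The crux is a shift identity for the single-coordinate Poisson factor $p(x;\theta)=\theta^x e^{-\theta}/x!$. Differentiation gives $\partial_\theta p(x;\theta)=p(x-1;\theta)-p(x;\theta)$ (with $p(-1;\theta)\equiv 0$), i.e. $(\partial_\theta+1)p(x;\theta)=p(x-1;\theta)$, so that $(\partial_\theta+1)$ acts as a unit down-shift in the count. Iterating $z$ times yields $(\partial_\theta+1)^z p(x;\theta)=p(x-z;\theta)$, and evaluating at $\theta=0$ gives $p(x-z;0)=\delta_z(x)$, since $0^0=1$ and $0^n=0$ for $n>0$. Taking the product over the $N$ coordinates and using the power-of-a-vector convention of the paper, I obtain $\delta_{\mathbf{Z}_0}(\mathbf{x})=(\nabla_{\boldsymbol{\theta}}+1)^{\mathbf{Z}_0}\prod_{j=1}^N p(x_j;\theta_j)\big|_{\boldsymbol{\theta}=\mathbf{0}}$, which is exactly the finite-order differential operator of \eqref{eq_coro2} acting on the Poisson product. (Equivalently one could feed the weighted-Poisson expansion of Proposition \ref{prop:delta_poisson} through $\Phi$ directly, but the differential-operator form is more compact and is the one appearing in \eqref{eq_coro2}.)

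Finally I would apply $\Phi$ to both sides. Because $(\nabla_{\boldsymbol{\theta}}+1)^{\mathbf{Z}_0}$ is a finite-order, constant-coefficient differential operator in the parameter $\boldsymbol{\theta}$, which is independent of the state argument on which $\Phi$ acts, it commutes with the linear map $\Phi$ and with the evaluation $\boldsymbol{\theta}=\mathbf{0}$. Hence $P_{\delta_{\mathbf{Z}_0}}(FPT>t)=\Phi[\delta_{\mathbf{Z}_0}]=(\nabla_{\boldsymbol{\theta}}+1)^{\mathbf{Z}_0}\,\Phi\big[\prod_{j}p(x_j;\theta_j)\big]\big|_{\boldsymbol{\theta}=\mathbf{0}}$, and substituting Lemma \ref{theo1} replaces the inner term by $\langle\exp(\lambda_S(t;\boldsymbol{\theta}))\rangle$, giving $(\nabla_{\boldsymbol{\theta}}+1)^{\mathbf{Z}_0}\langle\exp(\lambda_S(t;\boldsymbol{\theta}))\rangle\big|_{\boldsymbol{\theta}=\mathbf{0}}$, which is \eqref{eq_coro2}.

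The main obstacle I expect is rigorously justifying the interchange of $\Phi$ with differentiation in $\boldsymbol{\theta}$ and with the limit $\boldsymbol{\theta}\to\mathbf{0}$. Since $\Phi$ is defined through the infinite state-sum $\sum_{\mathbf{x}^*|x_0=0}P(\mathbf{x}^*,t)$, and $P(\mathbf{x}^*,t)$ depends on $\boldsymbol{\theta}$ only through the initial data, one must verify that $\langle\exp(\lambda_S(t;\boldsymbol{\theta}))\rangle$ is differentiable in $\boldsymbol{\theta}$ to order $z_1+\cdots+z_N$ and that the derivatives may be moved under the state-sum and the SDE expectation. This should follow from the analytic dependence of the Poisson probabilities on $\boldsymbol{\theta}$ together with the normalization bounds on the CME solution, but it is the one step that requires genuine analytic control rather than formal manipulation.
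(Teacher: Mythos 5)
Your proof is correct, but it follows a genuinely different route from the paper's own argument in Appendix A. The paper first proves Proposition \ref{prop:delta_poisson} via P\'olya's theorem, representing $\delta_{\mathbf{Z}_0}$ as an \emph{infinite} weighted sum of Poisson products $\sum_{\mathbf{k}}c_{\bm{k}}\prod_j p(x_j;\theta_j^{(k_j)})$, pushes this sum through the FPT map term by term, and then resums the series by exploiting the pathwise affine dependence $\lambda_S(t;\bm{\theta})=\mathcal{W}(t)\bm{\theta}+\lambda_S(t;\bm{0})$ obtained from Duhamel's principle for the linear SDE \eqref{sde_formu1}; only at the last step does the expression collapse into the operator $(\nabla_{\bm{\theta}}+1)^{\mathbf{Z}_0}|_{\bm{\theta}=\bm{0}}$. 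You instead take the shift identity $(\partial_\theta+1)^z p(x;\theta)\big|_{\theta=0}=\delta_z(x)$ — which the paper states only as an informal illustration after Theorem \ref{theo2} — as the actual engine of the proof: you write $\delta_{\mathbf{Z}_0}$ as a \emph{finite-order} differential operator applied to the Poisson product and commute that operator through the linear map $\Phi$. This is more economical: it bypasses P\'olya's theorem and Proposition \ref{prop:delta_poisson} entirely, avoids the convergence questions attached to the infinite sum over $\mathbf{k}$, and does not need the explicit affine structure of $\lambda_S$ in $\bm{\theta}$. What the paper's route buys in exchange is (i) the weighted-Poisson decomposition itself, which it reuses to frame the extension to arbitrary initial distributions, and (ii) an explicit verification, via $\lambda_S(t;\bm{\theta})=\mathcal{W}(t)\bm{\theta}+\lambda_S(t;\bm{0})$, that $\langle\exp(\lambda_S(t;\bm{\theta}))\rangle$ is smooth in $\bm{\theta}$ — precisely the differentiability you must assume in order to commute $(\nabla_{\bm{\theta}}+1)^{\mathbf{Z}_0}$ with $\Phi$. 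That interchange, which you correctly flag as the one delicate step, is justified by writing $\Phi[\mu]=\sum_{\mathbf{x}}G_t(\mathbf{x})\mu(\mathbf{x})$ with $0\le G_t(\mathbf{x})\le 1$ and noting that the $\bm{\theta}$-derivatives of the Poisson product are dominated, locally uniformly in $\bm{\theta}$, by summable functions of $\mathbf{x}$; this is no heavier than the unjustified interchanges of infinite sums and expectations already present in the paper's own derivation of \eqref{appeq:fpt_dec_simp1}.
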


Proof for theorem \ref{theo2} is presented in Appendix \ref{Appendix:A}.

The main difference of Theorem \ref{theo2} and our previous result shown in Lemma \ref{theo1} is the differential operator $\left(\nabla_{\boldsymbol{\theta}} +1\right)^{\textbf{Z}_0}|_{\boldsymbol{\theta}=\textbf{0}}$.
This operator transforms a Poisson product distribution into a delta distribution. 

To illustrate the transformation, consider a one-dimensional Poisson distribution parameterized by $\theta$:
\begin{equation}\label{onedim_ini}
p(x;\theta)=\left\{\begin{array}{cc}
    \frac{ \theta^{x}}{x !} \exp(-\theta). & x\geq 0  \\
     0 & x<0. 
\end{array} 
\right.
\end{equation}
The associated one-dimensional differential operator is defined as:
\begin{equation}
\begin{split}\label{onedim_dif}
\left.\left(\frac{{\rm d}}{{\rm d}\theta}+1\right)^{z}\right|_{\theta=0}
\end{split}
\end{equation} 
where $z$ is an integer. This operator ensures that the Poisson distribution \eqref{onedim_ini}  is transformed into the delta distribution $\delta_{z}$.  

Through a straightforward calculation, applying $\left(\frac{{\rm d}}{{\rm d}\theta}+1\right)^{z}$ modifies $p(x;\theta)$  to a shifted distribution 
$p_1(x;\theta)=p(x-z;\theta)$. Evaluating this result at $\theta = 0$ leads to $p_1(x;0)=\delta_{z}(x)$. Thus, the operator $\left(\nabla_{\boldsymbol{\theta}} +1\right)^{\textbf{Z}_0}|_{\boldsymbol{\theta}=\textbf{0}}$ effectively maps the Poisson distribution to the corresponding delta distribution. By analogy, the higher-dimensional operator $\left(\nabla_{\boldsymbol{\theta}} +1\right)^{\textbf{Z}_0}|_{\boldsymbol{\theta}=\textbf{0}}$, extends this transformation to a Poisson-product distribution, converting it into a delta distribution in multidimensional space.

\vspace{0.1em}

\subsection{Numerical approximations for the FPT distribution} 

Building on our previous work in  \cite{rao2024exact}, we present a numerical approximation method to compute \eqref{eq_coro2}, using a Padé approximant for $\left<\exp(\lambda_{S};t,\theta)\right>$.

We start by defining the function: $$H(s,t)=\left(\nabla_{\boldsymbol{\theta}} +1\right)^{\textbf{Z}_0}\left<\exp(s\lambda_S(t;\boldsymbol{\theta}))\right>|_{\boldsymbol{\theta}=\textbf{0}}.$$ The required quantity for Eq. (\ref{eq_coro2}) is given by $H(1,t)$. To approximate $H(1,t)$, we first construct a Padé approximant for the function $H(s,t)$, denoted as $T(s,t)$, and then set $s=1$ to calculate $T(1,t)$. 

To obtain the Padé approximant of $H(s,t)$, we use the approach outlined in \cite{baker1981morris}. This approximant is based on the Maclaurin series expansion of $H(s,t)$ in $s$, truncated after $\tilde{N}$ terms: $T_{\tilde{N}}(s,t)= \sum_{n=0}^{\tilde{N}} \frac{s^n}{n!} \times \frac{\partial^n}{\partial s^n}H(s,t)|_{s=0} = \sum_{n=0}^{\tilde{N}}b_n(t)s^n$.

The Padé method approximates $H(s,t)$ by a rational function that shares the same Maclaurin series as $T_{\tilde{N}}(s,t)$. This rational function is expressed as $P^*_{\tilde{L}}(s,t)/Q^*_{\tilde{N}-\tilde{L}}(s,t)$, where $P^*_{\tilde{L}}(s,t)$ and $Q^*_{\tilde{N}-\tilde{L}}(s,t)$ are polynomials of degree $\tilde{L}$ and $\tilde{N}-\tilde{L}$, respectively, for some integer $\tilde{L}$.

The coefficients $b_n(t)$ of the Maclaurin series $T_{\tilde{N}}(s,t)$ are determined by:
$\left(\nabla_{\boldsymbol{\theta}} +1\right)^{\textbf{Z}_0}<\lambda_{S}(t)^n>$; since $$\frac{\partial^n}{\partial s^n}H(s,t)|_{s=0}= \left(\nabla_{\boldsymbol{\theta}} +1\right)^{\textbf{Z}_0}\left<\lambda_S(t;\boldsymbol{\theta}^n)\right>|_{\boldsymbol{\theta}=\textbf{0}}$$.

The key to constructing a Padé approximant for $H(s,t)$ lies in determining $<\lambda_{S}(t)^n>$. We follow the procedure from \cite{press1992flannery}, outlined as follows:

\begin{enumerate}
    \item   \textbf{Differentiate $\lambda_{S}(t)^n$:} Using Ito’s rule, we differentiate: ${\rm d}(\lambda_{S}^n)=n\lambda_{S}^{n-1}{\rm d}\lambda_{S}+ \frac{n(n-1)}{2}\lambda_{S}^{n-2}({\rm d}\lambda_{S})^2$, and substitute ${\rm d}\lambda_{S}$ and $({\rm d}\lambda_{S})^2$ from the SDEs in \eqref{sde_formu1}. The right-hand side becomes a polynomial in terms of $\lambda_{S}^{l_0}\boldsymbol{\lambda}^{\bm{l}}$.

\item \textbf{Recursive Differentiation:} Continue differentiating all new terms of the form $\lambda_{S}^{l_0}\boldsymbol{\lambda}^{\bm{l}}$ and substitute ${\rm d}\lambda_{S}$ and the SDEs from \eqref{sde_formu1} until all terms on the right-hand side are known.

\item \textbf{Averaging:} Take the average of the equations for ${\rm d}(\lambda_{S}^n)$ and ${\rm d}(\lambda_{S}^{l_0}\boldsymbol{\lambda}^{\bm{l}})$. This produces a set of ODEs whose solutions yield $<\lambda_{S}(t)^n>$ and $<\lambda_{S}^{l_0}\boldsymbol{\lambda}^{\bm{l}}>$, with the ODE system having finite dimension due to the linearity of the SDEs in \eqref{sde_formu1}. The initial values for $<\lambda_{S}^{l_0}\boldsymbol{\lambda}^{\bm{l}}>$ are zero for $l_0 > 0$, and are $\bm{\theta}^{\bm{l}}$ when $l_0 = 0$.

\end{enumerate}

Finally, we differentiate both sides of the ODEs governing $<\lambda_{S}(t)^n>$ by $\left(\nabla_{\boldsymbol{\theta}} +1\right)^{\textbf{Z}_0}$, and set $\boldsymbol{\theta}=\textbf{0}$ to compute $\left(\nabla_{\boldsymbol{\theta}} +1\right)^{\textbf{Z}_0}<\lambda_{S}(t)^n>|_{\boldsymbol{\theta}=\textbf{0}}$. This process leaves the format of the ODEs unchanged but modifies the initial conditions from $\bm{\theta}^{\bm{l}}$ to $\left(\nabla_{\boldsymbol{\theta}} +1\right)^{\textbf{Z}_0}\boldsymbol{\theta}^{\bm{l}}|_{\boldsymbol{\theta}=\textbf{0}}$.

\subsection{Analytical representation for the auxiliary chemical master equation with arbitrary initial conditions}

Although our previous numerical procedure in section B is only for delta initial distribution, in this section, we slightly modify the previous procedure to one for an arbitrary initial condition, which does not increase the computational complexity.

An arbitrary initial distribution is represented as a sum of delta distributions
\begin{equation}\label{eq_arbini}
    {\rm P}(\bm{x},t)=\sum_{\bm{y}}d_{\bm{y}}\delta_{\bm{y}}(\bm{x}),
\end{equation}
where the summation is over all possible system states, $\bm{y}$.

The analytical FPT distribution with the arbitrary initial distribution, \eqref{eq_arbini}, follows from \eqref{eq_coro2} and linearity of the dependence of the FPT distribution on the initial distribution:
\begin{equation}
\begin{split}\label{eq_coro3}
P(FPT>t)&=\sum_{\bm{y}}d_{\bm{y}}\left(\nabla_{\boldsymbol{\theta}} +1\right)^{\bm{y}}\left<\exp(\lambda_S(t;\boldsymbol{\theta}))\right>|_{\boldsymbol{\theta}=\textbf{0}}.
\end{split}
\end{equation}

The numerical procedure modifies the initial value of $<\lambda_{S}^{l_0}\boldsymbol{\lambda}^{\bm{l}}>$ in the previous procedure from $\left(\nabla_{\boldsymbol{\theta}} +1\right)^{\textbf{Z}_0}\boldsymbol{\theta}^{\bm{l}}|_{\boldsymbol{\theta}=\textbf{0}}$ to $\sum_{\bm{y}}d_{\bm{y}}\left(\nabla_{\boldsymbol{\theta}} +1\right)^{\bm{y}}\boldsymbol{\theta}^{\bm{l}}|_{\boldsymbol{\theta}=\textbf{0}}$

\subsection{Theory validation and simulation results}

We first applied our theory to an example of a biochemical network consisting of four zero/first-order reactions upstream of a second-order reaction:
\begin{equation}\label{exam_sys}
\begin{array}{lll}
\emptyset  &{\xtofrom[\ \;{ a_2(t)}\ ]{\ \;{ a_1(t)}\ }} & S_1 \vspace{1ex},\\
\emptyset  &{\xtofrom[\ \;{ a_4(t)}\ ]{\ \;{ a_3(t)}\ }} & S_2 \vspace{1ex},\\
S_1 + S_2&{\xrightarrow{\ \ \ \;{ a_0(t)}\ \ \ }} &S_0,
\end{array}
\end{equation}

We will determine the FPT distribution using Eq. (\ref{eq_coro1}) and Eq. (\ref{eq_coro2}), and we will compare these results with those obtained from simulations conducted using the Gillespie algorithm \cite{gillespie2013perspective, cao2006efficient}, which serves as our stochastic simulation algorithm (SSA).

We first investigated the FPT distributions of the system described in Eq. \ref{exam_sys} when the reaction rates are constants. We compared the FPT distributions resulting from delta initial distributions and those from Poisson-product form distributions, ensuring that all other factors remained constant. To facilitate a fair comparison, each component of the delta distribution was selected to be the mean of the corresponding marginal Poisson distribution used in the Poisson-product form.

\begin{figure}[htp]
\includegraphics[width=0.95\linewidth]{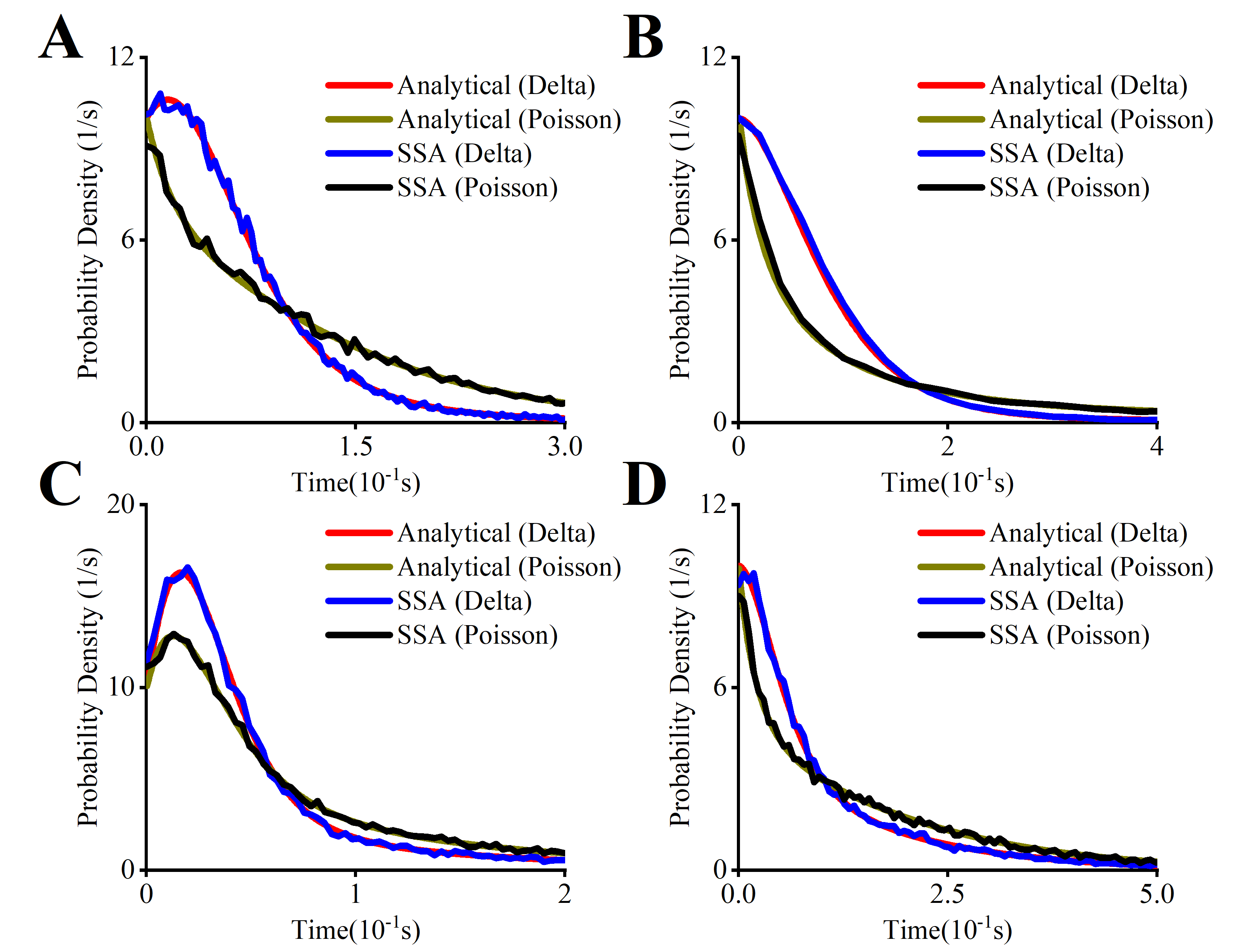}
\caption{\label{fig:delta-toy} First Passage Time (FPT) distributions for the example biochemical network described in Eq. \ref{exam_sys}, comparing cases where initial conditions are set to fixed values (delta distribution) versus a Poisson-product form. FPT distributions were calculated for four different parameter sets (A-D). In each parameter set, the fixed initial state value was chosen as the mean of the corresponding marginal Poisson distribution. The results show that variations in initial conditions lead to significant differences in the FPT distributions (blue lines vs. black lines). This further highlights the necessity of pushing the theoretical analysis to arbitrary initial conditions.} 
\end{figure}

First and foremost, our theoretical results align well with the stochastic simulations. Under the same controlled initial distributions, both Eq. (\ref{eq_coro1}) and Eq. (\ref{eq_coro2}) provide a good approximations of the corresponding FPT distributions obtained from the SSA simulations (as seen in Fig. \ref{fig:delta-toy}, where red lines correspond to the theoretical results and blue lines correspond to the simulated results, with gray and black lines representing additional comparisons).

The FPT distributions show significant differences across all four tested parameter sets when initial conditions change from one distribution to another (illustrated by the blue vs. black lines in Fig. \ref{fig:delta-toy}). This pattern appears to be robust, as the parameter sets were chosen randomly from diverse regions of the parameter space. Moreover, the FPT distribution statistics can vary substantially depending on the parameter selection. For example, in Fig. \ref{fig:delta-toy} A and B, both the mean and standard deviation can change by over 50\% (Tab. \ref{tab1}). Notably, in Fig. \ref{fig:delta-toy}, the mean FPT shifted by 186\%, while the standard deviation altered by 108\% when the initial condition changed from a Poisson product form to a delta distribution.

These dramatic statistical shifts underscore the significant impact of initial conditions on FPT distribution outcomes, highlighting the importance of our theoretical corrections for FPT distributions under different initial conditions. The pronounced sensitivity of FPT statistics to initial state changes further illustrates that full FPT distributions provide a more comprehensive reflection of system dynamics compared to relying solely on summary statistics, which are more susceptible to variations in the system.

\begin{table}[htp]
    \centering
    \caption{Statistics of the FPT distributions in Fig. \ref{fig:delta-toy} }
    \begin{threeparttable}
    \begin{tabular}{c|c|cccc}
    \toprule
    Statistic    & \makecell[c]{Initial or\\ Difference} & panel A &panel B  & panel C   & panel D\\
    \hline
    \multirow{3}{*}{ Mean}    &Poisson &0.11&0.33&-
                                &-
                                \\
                              & Delta  &0.064&0.11&-&-\\
                               & Difference &\textbf{69\%}        &\textbf{186\%}&
                    -&-\\
                    \hline
                     \multirow{3}{*}{\makecell{Standard \\Deviation}} & Poisson  &0.097&0.45&-&-\\
                               & Delta  & 0.055&0.22&-&-\\
                               & Difference &\textbf{77\%}&        \textbf{108\%}&-&-\\
                    \hline
    \end{tabular}
    \begin{tablenotes}
    \footnotesize
     \item   Note: any changes below 50\% are not shown and are indicated with a dashed line ('-').
    \end{tablenotes}
    \end{threeparttable}
    \label{tab1}
\end{table}

Secondly, we investigated the FPT distributions of the system described in Eq. \ref{exam_sys} under conditions of time-varying reaction rates (see Fig. \ref{fig:time_var}). To simplify our analysis, we focused on varying the parameter $a_0$, while keeping all other rate parameters constant. We tested four distinct cases with different temporal profiles for $a_0$: ramping signals with two different rising speeds (Figs. \ref{fig:time_var}A \& B), a staircase signal (Fig.\ref{fig:time_var}C), and a sinusoidal signal (Fig. \ref{fig:time_var}D). Our theoretical results were validated against stochastic simulations in all tested cases (red vs. blue and gray vs. black lines in Fig. \ref{fig:time_var}), showcasing the robustness of our theoretical analysis. To our knowledge, this represents the first comprehensive theoretical analysis providing complete FPT distributions for time-varying biochemical systems with second-order reactions.

It is well established that reaction rates influence FPT distributions, but the interaction between reaction rates and initial conditions in shaping these distributions is complex. Notably, the FPT distributions vary with changes in the system's initial distributions, and the extent of these changes is dependent on the temporal patterns of $a_0$. Certain temporal profiles lead to more pronounced differences than others. For example, when $a_0$ follows a staircase pattern (Fig. \ref{fig:time_var}), the FPT distribution derived from Poisson-product form initial distributions exhibits significantly different skewness compared to that from delta initial conditions (Fig.\ref{fig:time_var}C). Interestingly, even without altering the temporal pattern of $a_0$, slight changes in its rising speed can yield different effects as a result of the initial condition distributions. The two FPT distributions resulting from the distinct initial distributions show greater divergence in Fig. \ref{fig:time_var}A than in Fig. \ref{fig:time_var}B.

\begin{figure}[htp]
\includegraphics[width=0.95\linewidth]{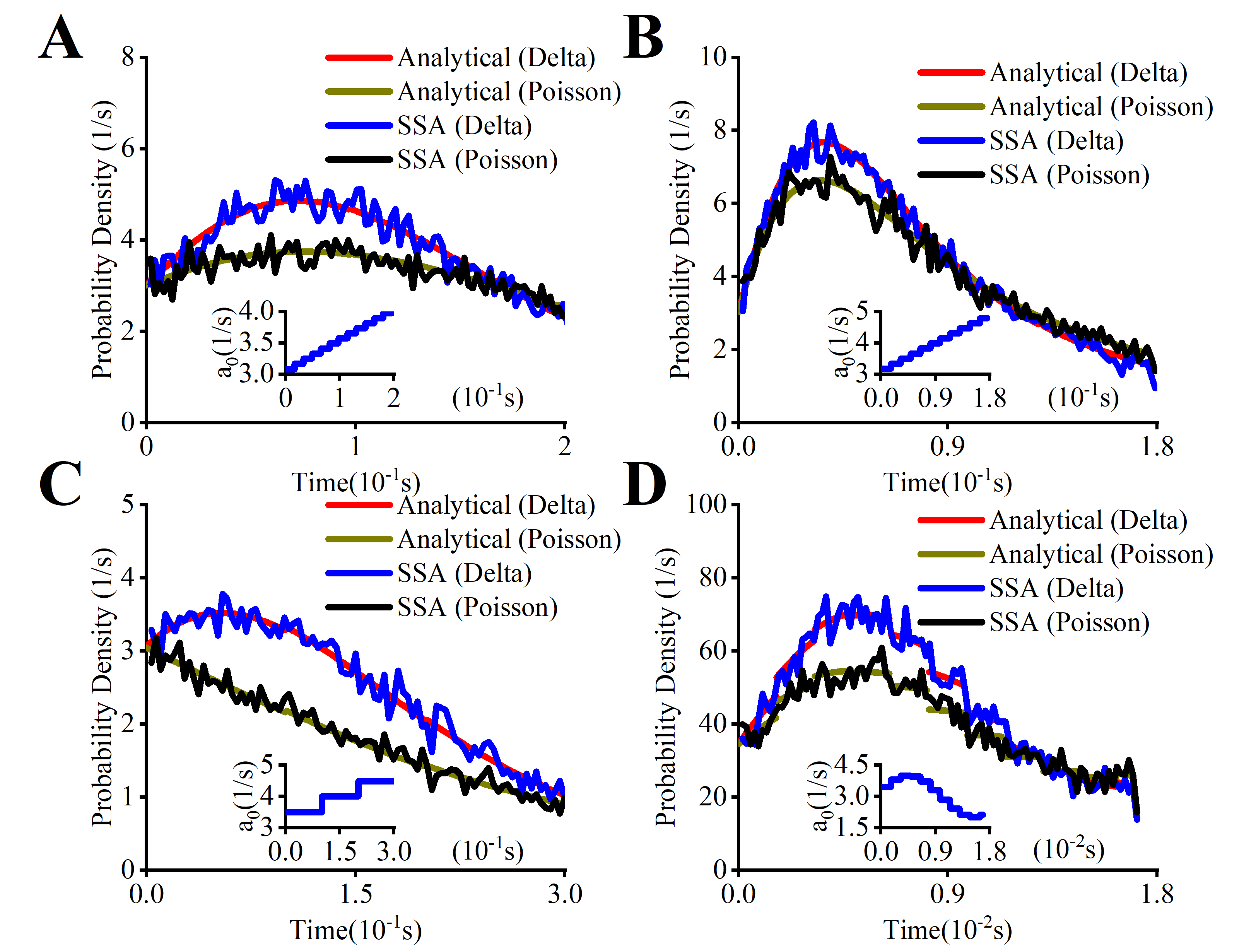}
\caption{\label{fig:time_var} First Passage Time (FPT) distributions for a biochemical network described by Eq. \ref{exam_sys} under conditions of time-varying reaction rates. Two types of initial conditions were evaluated: a fixed value (delta distribution) and a Poisson-product form. To simplify the analysis, only the time profile of $a_0$ —the rate parameter of the network’s second-order reaction—was varied. Four temporal profiles for $a_0$ were tested, including ramping signals at two different speeds (A-B), a staircase signal (C), and a sinusoidal signal (D). For each case, the fixed initial state corresponded to the mean of the associated Poisson marginal distribution. Our results demonstrate that the theoretical framework accurately predicts FPT distributions in time-dependent systems. Similar to cases with constant rates, variations in initial conditions lead to substantial differences in FPT distributions (blue vs. black lines), highlighting the robustness of the theoretical approach for both arbitrary initial conditions and time-dependent dynamics.} 
\end{figure}

\begin{figure}[htp]
\includegraphics[width=0.95\linewidth]{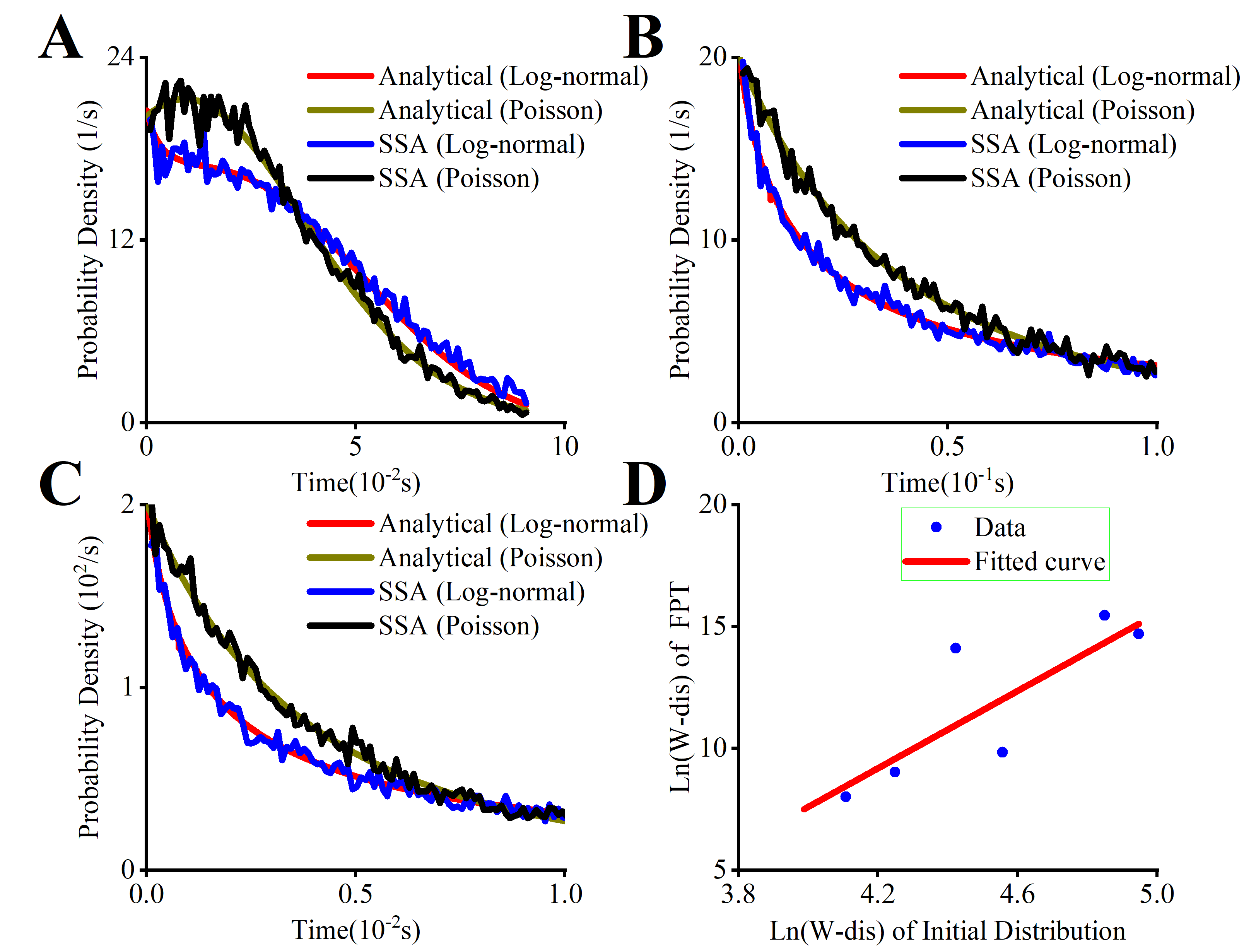}
\caption{\label{fig:logn-toy} First Passage Time (FPT) distributions for the biochemical network in Eq. \ref{exam_sys}, comparing initial conditions set as a lognormal-product form versus a Poisson-product form. FPT distributions were calculated for four parameter sets (A-D), with each state’s mean in the marginal lognormal and Poisson distributions set to be the same. The results indicate that switching from Poisson to lognormal initial conditions causes noticeable changes in the FPT distributions (blue lines vs. black lines), though these changes are less pronounced than those caused by delta-distributed initial conditions. Further analysis reveals that larger differences in initial conditions lead to greater differences in FPT distributions, indicating a consistent relationship between initial condition variations and FPT distribution changes. This monotonic relationship underscores the linear mapping between FPT distributions and initial conditions.} 
\end{figure}

These findings have important implications for the study of FPT distributions in time-varying biochemical systems. Relying solely on first- or second-order moments of FPT distributions is insufficient for understanding system dynamics and may fall short in comparative analyses. This inadequacy arises from the intricate relationships between these statistical measures, initial conditions, and time-varying rate parameters. Thus, there is a clear necessity to calculate the full FPT distributions to effectively capture the underlying dynamics.

Since our analysis can be generalized to accommodate any initial conditions, we simulated and analyzed the system described in Eq. \ref{exam_sys} with each state initialized using lognormally distributed variables (see Fig. \ref{fig:logn-toy}). We compared the FPT distributions resulting from lognormal initial states to those from Poisson-distributed initial states, ensuring that the means of each state variable remained consistent across different distributions. For simplicity, we tested the system with four sets of constant reaction rates. The varying initial distributions in all tested cases produced distinct FPT distributions. 

However, the statistical changes resulting from lognormal initial conditions (Table \ref{tab2}) are less pronounced compared to those from delta initial conditions (Table \ref{tab1}). For instance, in Fig. \ref{fig:logn-toy}A, the mean shift is only 11\%, whereas Fig. \ref{fig:delta-toy}A shows a significant 69\% shift. 

To investigate why certain initial distributions lead to more significant differences in FPT distributions than others, we explored whether there is an underlying principle governing this effect. We calculated the Wasserstein (W) distances between the initial conditions and the resulting FPT distributions. Our findings show that greater differences in initial conditions correlate with larger differences in the FPT distributions. This result aligns with the established understanding of a linear mapping between the system's initial conditions and its FPT distributions.


\begin{table}[htp]
    \centering
    \caption{Statistics of the FPT distributions in Fig. \ref{fig:logn-toy}}
    \begin{threeparttable}
    \begin{tabular}{c|c|C{1.5cm}C{1.5cm}c}
    \toprule
    Statistic    & \makecell[c]{Initial or\\ Difference} &panel A &panel B  & panel C    \\
    \hline
    \multirow{3}{*}{ Mean}    &Poisson& 0.028 &-&-\\    
                               & Log normal&0.031 &-&-\\   
                               & Difference &\textbf{11\%}&    -&    -\\
                    \hline
                     \multirow{3}{*}{\makecell{Standard \\Deviation}} & Poisson  &-&-&-\\
                               & Log normal&-&-&-\\
                               & Difference &-&-&-\\                  
                    \hline
    \end{tabular}
     \begin{tablenotes}
    \footnotesize
     \item   Note: any changes below 10\% are not shown and are indicated with a dashed line ('-').
    \end{tablenotes}
    \end{threeparttable}
    \label{tab2}
\end{table}

Furthermore, we tested our theory on a simplified genetic regulation network model featuring a single gene that can exist in two states: inactive ($G$) and active ($G^*$)\cite{jia2024holimap}. In this model, a protein 
$P$, which is dynamically synthesized and degraded, can bind to $G$ to activate it. Once activated, the gene $G^*$ may revert to its inactive state after some time. The genetic regulation network (GRN) encompasses three chemical reactions, including a second-order reaction, as described in Eq. \eqref{grn}. Starting with substances that generate protein $P$, we aim to determine the timing of when protein $P$ successfully activates the gene.

\begin{equation}
\label{grn}
\begin{array}{lll}
\emptyset&\xtofrom[a2]{a1} &P\\
G^*&{\xrightarrow{\ \ a_3\ \ }} &G,\\
G +P&{\xrightarrow{\ \ a_0\ \ }} &G^*,\\
\end{array}
\end{equation}

We primarily investigated the impact of changes in initial distributions on the FPT distributions. Our findings indicate that all the conclusions drawn from testing the model in Eq. \ref{exam_sys} also apply to the genetic regulation network model in Eq. \ref{grn}:

\begin{enumerate}
    \item Our theoretical framework accurately predicts FPT distributions for all tested initial distributions, including Poisson product form distributions, delta distributions, and lognormal distributions (Fig. \ref{fig:delta-grn} and Fig. \ref{fig:logn-grn}). 
    \item Variations in initial conditions can lead to significant changes in the FPT distributions and their statistics (Tab. \ref{tab3} and Tab. \ref{tab4}). 
    \item In general, greater distances between the initial conditions correspond to larger differences in the resulting FPT distributions.
\end{enumerate}

\begin{figure}[htp]
\includegraphics[width=0.95\linewidth]{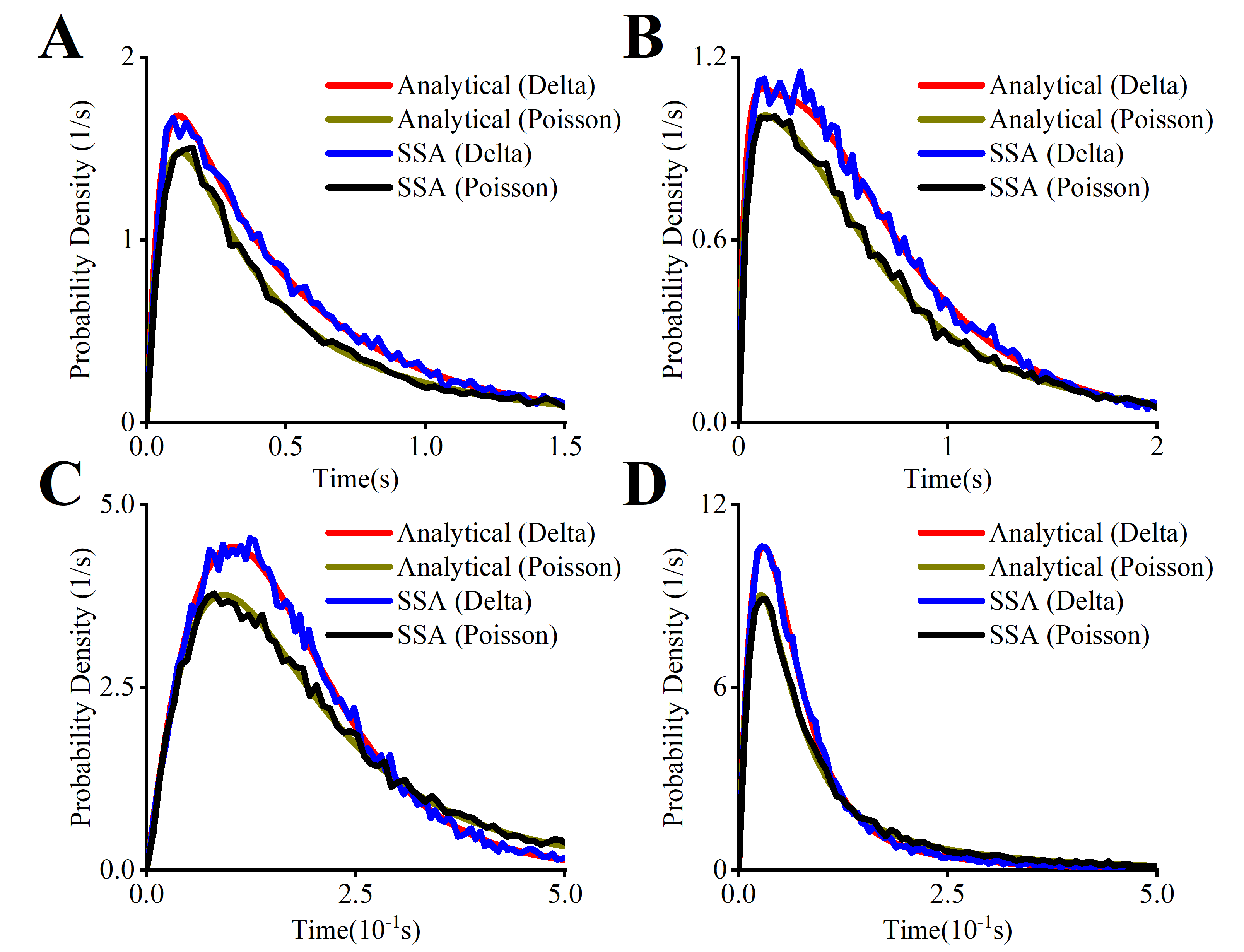}
\caption{\label{fig:delta-grn} First Passage Time (FPT) distributions for the genetic regulation network described in Eq. \ref{grn}, comparing cases where initial conditions are set to fixed values (delta distribution) versus a Poisson-product form. FPT distributions were computed for four different parameter sets (A-D). In each set, the fixed initial condition was chosen as the mean of the corresponding marginal Poisson distribution. The results demonstrate that differences in initial conditions lead to notable variations in FPT distributions (blue vs. black lines), underscoring the adaptability of our theoretical approach to various biochemical networks with arbitrary initial conditions.} 
\end{figure}

\begin{table}[htp]
    \centering
    \caption{Statistics of the FPT distributions in Fig.  \ref{fig:delta-grn}}
    \begin{threeparttable}
    \begin{tabular}{c|c|cccc}
    \toprule
    Statistic    & \makecell[c]{Initial or\\ Difference} & panel A & panel B  & panel C   & panel D\\
    \hline
    \multirow{3}{*}{ Mean}    &Poisson &-&-&-
    &      0.11\\
                              & Delta  &-&-&-
                              &     0.081\\
                               & Difference &-&-&-
                               &\textbf{32\%}\\
                    \hline
                     \multirow{3}{*}{\makecell{Standard \\Deviation}} & Poisson  &-&0.58
                     &       0.15&      0.11\\
                               & Delta  &-& 0.42
                               &      0.10&     0.074\\
                               & Difference &-&\textbf{37\%} 
                               &       \textbf{44\%}      &\textbf{50\%}\\                  
                    \hline
    \end{tabular}
    \begin{tablenotes}
    \footnotesize
     \item   Note: any changes below 30\% are not shown and are indicated with a dashed line ('-').
    \end{tablenotes}
    \end{threeparttable}
    \label{tab3}
\end{table}

\begin{figure}[htp]
\includegraphics[width=0.95\linewidth]{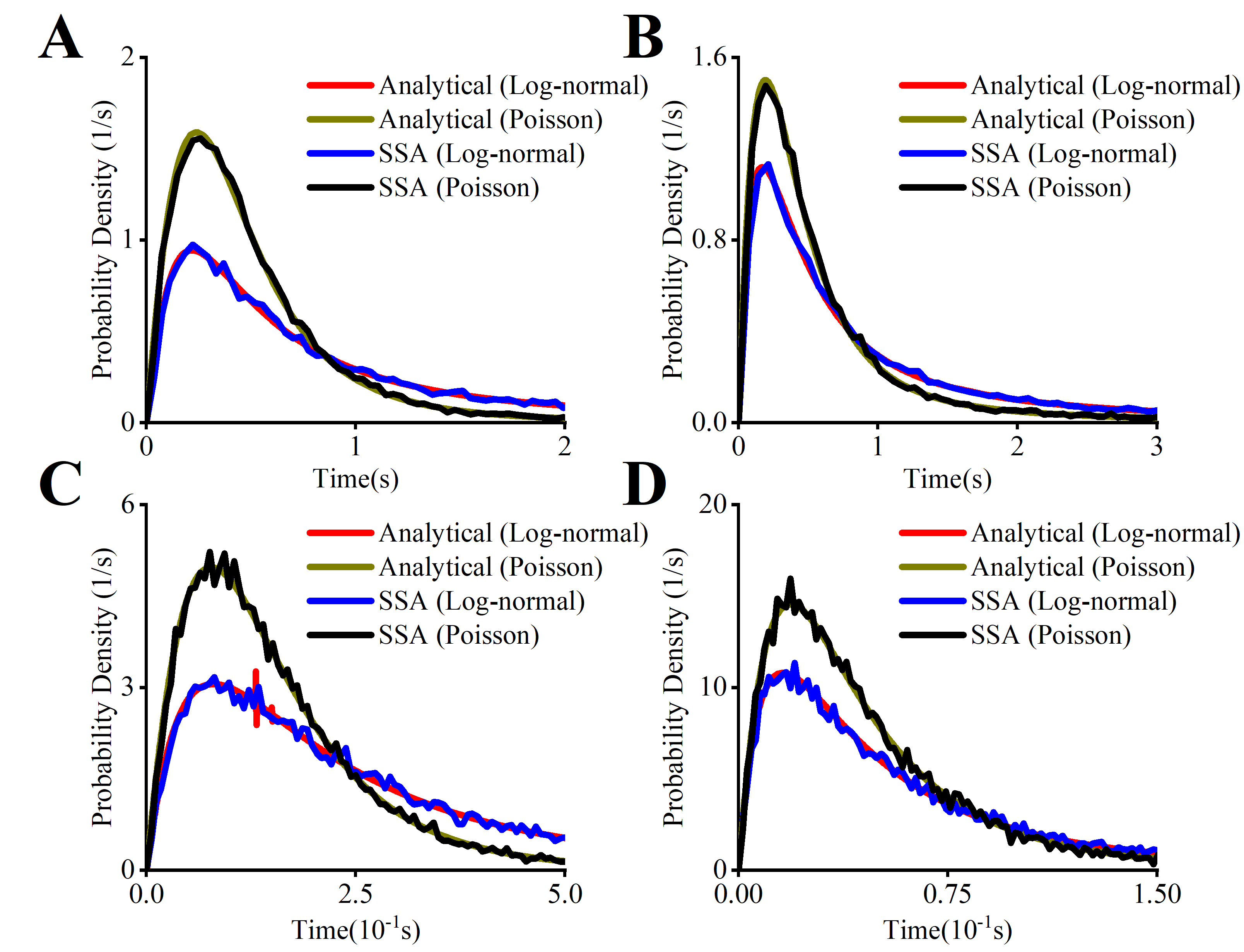}
\caption{\label{fig:logn-grn} First Passage Time (FPT) distributions for the genetic regulation network in Eq. \ref{grn}, comparing initial conditions set as a lognormal-product form versus a Poisson-product form. FPT distributions were calculated for four different parameter sets (A-D), ensuring that the mean of each state’s marginal lognormal and Poisson distributions was the same. The results show that switching from Poisson to lognormal initial conditions leads to noticeable changes in FPT distributions (blue vs. black lines). Unlike the previously studied biochemical network, some lognormal initial conditions resulted in more pronounced differences than delta initial conditions. Further analysis confirms that larger discrepancies in initial conditions correlate with greater differences in FPT distributions, supporting a consistent monotonic relationship between initial condition variation and changes in FPT distributions. This emphasizes the linear mapping between FPT distributions and initial conditions.} 
\end{figure}

\begin{table}[htp]
    \centering
    \caption{Statistics of the FPT distributions in Fig. \ref{fig:logn-grn}}
    \begin{threeparttable} 
    \begin{tabular}{c|c|cccc}
    \toprule
    Statistic    & \makecell[c]{Initial or\\ Difference} &panel A   & panel B   &\quad panel C    & panel D\\
    \hline
    \multirow{3}{*}{ Mean}    & Poisson &0.52&      0.64&    -&-\\
                               & Log normal&0.87&       1.0&   -&-\\
                               & Difference &\textbf{40\%}     &\textbf{39\%}     &-&-\\
                    \hline
                     \multirow{3}{*}{\makecell{Standard \\Deviation}} & Poisson  &0.45&      0.70&   -&-\\
                               & Log normal&0.80&       1.2&    -&-\\
                               & Difference &\textbf{44\%}     &\textbf{42\%}     &-&-\\
                    \hline
    \end{tabular}
    \begin{tablenotes}
    \footnotesize
     \item   Note: any changes below 30\% are not shown and are indicated with a dashed line ('-').
    \end{tablenotes}
    \end{threeparttable} 
    \label{tab4}
\end{table}

\section{Conclusion}

Initial conditions greatly shape the full FPT distributions in nonlinear biochemical networks. These conditions interact in a complex manner with nonlinear state transitions and potentially time-varying reaction rates to influence the time it takes for the system to reach a target state. This raises the question: how do initial conditions theoretically influence FPT distributions in nonlinear biochemical networks?

In this paper, we explored this question by deriving the full FPT distribution for a class of biochemical networks with arbitrary initial conditions. This work builds on our previous research, which provided an exact FPT distribution for nonlinear biochemical systems featuring an $A + B \rightarrow C$ type of second-order reaction downstream of a series of monomolecular reactions.

Our prior theoretical formulation, initially applied to Poisson-product form distributions, was adapted to include delta initial distributions, which can be expressed as a weighted sum of Poisson distributions. We then expanded the analysis to encompass arbitrary initial distributions, treated as combinations of discrete delta distributions.

Our findings demonstrate that even maintaining the same mean while varying the initial state distributions can result in significant shifts in the FPT distributions. Additionally, our theory describes how changes in initial conditions lead to variations in FPT distributions. We established that FPT distributions are linear with respect to initial conditions for the systems we studied, indicating that larger variations in initial conditions correspond to proportionally larger changes in FPT distributions.

Although our results are specific to systems conforming to \eqref{apb_reac}, to the best of our knowledge, this work is the first to derive exact FPT distributions for a general class of nonlinear biochemical networks with arbitrary initial conditions. Our results underscore the importance of theoretical studies on FPT distributions under varied initial conditions. We anticipate that future research will extend this type of theoretical analysis to a wider range of nonlinear biochemical networks, such as the commonly used enzymatic processes represented by the Michaelis-Menten reaction, where a first-order reaction follows a reversible second-order reaction.

\section*{Data Access and Conflict of Interest}
All relevant data are within the paper. The authors have declared that no competing interests exist. 

\section*{Acknowledgments}
\begin{acknowledgments}
Project supported by the Young Scientists Fund of the National Natural Science Foundation of China (Grant No. 12001111), Fund from the National Natural Science Foundation of China (Grant No. 11925103), Shanghai Municipal Science and Technology Major Project (No.2018SHZDZX01), ZJ Lab, and Shanghai Center for Brain Science and Brain-Inspired Technology, the 111 Project (No.B18015), and the 2021 STCSM (Grant No. 2021SHZDZX0103).
\end{acknowledgments}

\bibliography{apssamp}
\appendix
\begin{widetext}
\titleformat{\section}{\Large\bfseries}{}{0.3em}{\appendixname~\Alph{section}.}
\setcounter{proposition}{0}

\section{ Proof of Theorem \ref{theo2}}\label{Appendix:A}
We start from Proposition \ref{prop:delta_poisson}. 

\begin{proposition}\label{appprop:delta_poisson}
Any $N$-dimensional delta distribution, $\delta_{\bm{Z}_0}$, can be represented as a weighted sum of distributions of the Poisson-product form (refer to Appendix \ref{Appendix:B} for proof):

\begin{equation}\label{app:delta_poisson}
\begin{split}
\delta_{\bm{Z}_0}(\bm{x})
&=\sum_{k_1,k_2,\dots,k_N =0}^{\infty} c_{\bm{k}}\prod_{j=1}^N \frac{(\theta^{(k_j)}_j)^{x_j}}{x_j!}\exp(-\theta^{(k_j)}_j ),
\end{split}
\end{equation}
where $\bm{k}=[k_1,\dots,k_N]^\top$ is the index vector; $c_{\bm{k}}$ is the weight coefficient; and $\theta^{(k_j)}_j$ is  {\color{black}the parameter of the $k\operatorname{th}$} distribution component within the Poisson-product-form.
\end{proposition}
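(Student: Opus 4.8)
The plan is to reduce the $N$-dimensional claim to a scalar statement and then build the scalar representation from the shift identity already used in the main text. Because the target factorizes, $\delta_{\bm{Z}_0}(\bm{x})=\prod_{j=1}^{N}\delta_{z_j}(x_j)$, it suffices to represent each one-dimensional factor as a weighted sum of univariate Poisson PMFs, $\delta_{z_j}(x_j)=\sum_{k_j\geq 0}c^{(j)}_{k_j}\,p\bigl(x_j;\theta^{(k_j)}_j\bigr)$, with $p(x;\theta)$ as in \eqref{onedim_ini}. Distributing the finite product over the $N$ sums then reproduces exactly the form \eqref{app:delta_poisson}, with joint weight $c_{\bm{k}}=\prod_{j=1}^{N}c^{(j)}_{k_j}$. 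Thus the entire problem collapses to the scalar case, the only bookkeeping being the interchange of the finite product with the infinite sums.

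For the scalar case I would start from the differential operator highlighted after Theorem \ref{theo2}. Differentiating $p(x;\theta)=\theta^{x}e^{-\theta}/x!$ gives $\frac{d}{d\theta}p(x;\theta)=p(x-1;\theta)-p(x;\theta)$ (with the convention $p(n;\theta)=0$ for $n<0$), so that $\left(\frac{d}{d\theta}+1\right)p(x;\theta)=p(x-1;\theta)$ acts as a unit downward shift in $x$. Iterating $z_0$ times yields $\left(\frac{d}{d\theta}+1\right)^{z_0}p(x;\theta)=p(x-z_0;\theta)$, and evaluating at $\theta=0$ with the convention $0^0=1$ gives $p(x-z_0;0)=\delta_{z_0}(x)$; that is, the operator \eqref{onedim_dif} sends the Poisson PMF to the delta. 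This is the clean, rigorous core and needs only elementary calculus.

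The remaining task is to convert this differential operator at $\theta=0$ into a genuine weighted sum of Poisson PMFs evaluated at a sequence of parameters $\{\theta_k\}$, which is what produces the coefficients $c_k$. I would phrase this through generating functions: the generating function of $\sum_k c_k\,p(\cdot;\theta_k)$ is $\sum_k c_k e^{\theta_k(s-1)}$, so the scalar claim is equivalent to expanding the monomial generating function $s^{z_0}=\bigl(1+(s-1)\bigr)^{z_0}$ of $\delta_{z_0}$ as a superposition of Poisson generating functions $e^{\theta_k(s-1)}$. Matching the coefficient of each power $(s-1)^{j}$ reduces this to the moment conditions $\sum_k c_k\theta_k^{j}=(z_0)_{\underline{j}}$ for all $j\geq0$, where $(z_0)_{\underline{j}}$ denotes the falling factorial, which vanishes for $j>z_0$. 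I would satisfy these conditions by realizing the derivatives $\frac{d^{j}}{d\theta^{j}}p(x;\theta)\big|_{\theta=0}$ as limits of divided differences of $p(x;\cdot)$ over a sequence $\theta_k\to0$, which simultaneously fixes the parameters $\theta_k$ and the binomial/divided-difference weights $c_k$.

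The main obstacle is convergence of the resulting infinite series. For $z_0\geq1$ no finite combination of Poisson PMFs at distinct parameters can reproduce the delta, since its generating function is an exponential polynomial and can never equal the monomial $s^{z_0}$; consequently the parameters $\theta_k$ must accumulate at $\theta=0$ and the weights must grow without bound, so absolute and generating-function-level convergence cannot be expected. I would therefore establish convergence \emph{pointwise in} $x$: for each fixed $x$ the factor $\theta_k^{x}e^{-\theta_k}/x!$ supplies decay, and $\sum_k c_k\,p(x;\theta_k)$ can be shown to converge to $\delta_{z_0}(x)$ by a dominated-convergence argument that exploits the analyticity of $\theta\mapsto p(x;\theta)$ and the exactness of the finite-difference formulas on polynomials. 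Assembling the $N$ scalar representations into the product \eqref{app:delta_poisson} then completes the proof. It is worth noting that Theorem \ref{theo2} ultimately uses only the operator form of this identity, so this delicate series convergence, while the crux of a fully rigorous proof of the Proposition, is not itself required for the downstream FPT results.
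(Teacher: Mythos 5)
Your reduction of the $N$-dimensional claim to the scalar case via factorization, and your generating-function reformulation of the scalar claim as the moment conditions $\sum_k c_k\theta_k^{\,j}=z_0(z_0-1)\cdots(z_0-j+1)$, are both sound and parallel the paper's setup (the paper likewise writes $\delta_{\bm{Z}_0}=\prod_j\delta_{z_j}$ with $c_{\bm{k}}=\prod_j c_{j,k_j}$ and reduces to a one-dimensional existence question). But at the crucial existence step your route has a genuine gap. Realizing $\frac{d^{j}}{d\theta^{j}}p(x;\theta)\big|_{\theta=0}$ as limits of divided differences produces a \emph{sequence of finite Poisson combinations converging to} $\delta_{z_0}$; it does not by itself produce a single infinite series $\sum_{k}c_k\,p(x;\theta_k)$ with fixed coefficients equal to $\delta_{z_0}(x)$, and the passage from limit to series is precisely what remains unproven in your sketch. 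Moreover, your own analysis shows your construction cannot converge absolutely: with $\theta_k\to 0$ and $|c_k|\to\infty$ one has $p(0;\theta_k)\to 1$, so $\sum_k|c_k|\,p(0;\theta_k)$ diverges. That defeats the proposed dominated-convergence argument, and — worse — it undermines the very first step of your plan: distributing the finite product over the $N$ infinite sums to assemble \eqref{app:delta_poisson} is a Fubini-type rearrangement that is only licensed by absolute convergence.

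The paper sidesteps all of this with a different device, and its choice refutes your structural inference: it takes $\theta^{(k)}=k+1$, parameters marching to infinity rather than accumulating at $0$ (your claim that the $\theta_k$ \emph{must} accumulate at $0$ follows from the generating-function obstruction only for \emph{finite} combinations; for infinite series it is a non sequitur). The pointwise requirements $\sum_k c_k e^{-\theta^{(k)}}(\theta^{(k)})^{x}/x!=\delta_z(x)$ are then an infinite linear system in $y_k=c_k e^{-(k+1)}$ with matrix $a_{i,j}=j^{\,i-1}/(i-1)!$, and existence of a solution with \emph{all series absolutely convergent} is supplied by P\'olya's theorem (Theorem 5.3.1 in Davis): condition (A) holds because the relevant column sections factor as a diagonal matrix times a Vandermonde matrix, and condition (B) holds since $a_{i-1,j}/a_{i,j}=(i-1)/j\to 0$. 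The absolute convergence delivered by P\'olya's theorem is exactly what justifies the multidimensional product step that your construction cannot support. To repair your proof you would need either an explicit telescoping construction with a proof of pointwise convergence \emph{plus} a separate rearrangement argument in $N$ dimensions, or simply to replace the divided-difference step by an existence theorem for infinite linear systems of this P\'olya type. Your closing remark that Theorem \ref{theo2} ultimately uses only the operator identity is fair, but the Proposition asserts the series representation itself, and that is the part your argument does not establish.
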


Because of Proposition $1$, any high-dimensional delta initial conditions of system \eqref{apb_reac} can be represented as a weighted sum of distributions of the Poisson-product form. That is, with $\theta_j^{(k)}$ the parameter of the $k$th Poisson distribution, we have


\begin{equation}\label{appeq_ini_d2p}
\begin{split}
&P(\bm{x},0)=\prod_{j=1}^N\delta_{z_j}(x_j)=
\sum_{\mathbf{k}}c_{\bm{k}} \prod_{j=1}^N \frac{(\theta^{(k_j)}_j)^{x_j}}{x_j!}\exp(-\theta^ {(k_j)}_j ).
\end{split}
\end{equation}

If the initial conditions, of the system in \eqref{apb_reac}, follow a distribution with a format of the $k$th term in \eqref{appeq_ini_d2p}, i.e. Poisson-product form, we have a FPT distribution: 

\begin{equation}\label{appeq:fpt_deexp}
\begin{split}
{\rm P}\left(FPT_{\bm{k}}>t\right)=\left<\exp\left(\lambda_S(t;\bm{\theta}^{(\mathbf{k})})\right)\right>.
\end{split}
\end{equation}

The linear mapping between the FPT distribution and the system's initial conditions guarantees the superposition principle, leading to a result that the FPT distribution for an arbitrary delta initial condition ($\delta_{\mathbf{Z}_0}$) can be expressed as a sum of the FPT distributions that result from the Poisson-product form initial conditions (\ref{appeq:fpt_dec}). 

\begin{equation}\label{appeq:fpt_dec}
\begin{split}
{\rm P}\left(FPT>t\right)=\sum_{\mathbf{k}}c_{\bm{k}}{\rm P}\left(FPT_{\bm{k}}>t\right)=\sum_{\mathbf{k}}c_{\bm{k}}\left<\exp\left(\lambda_S(t;\bm{\theta}^{(\mathbf{k})})\right)\right>,
\end{split}
\end{equation}
where $\mathbf{k} =[k_1,k_2,\dots,k_N]$ is the index vector in \eqref{appeq_ini_d2p}, and $\bm{\theta}^{(\mathbf{k})}=[\theta_1^{(k_1)},\theta_2^{(k_2)},\dots,\theta_N^{(k_N)}]^\top$ is the mean parameter within the Poisson-product form distributions.  

Next, we prove that \eqref{appeq:fpt_dec} is equivalent to the result in Theorem 1. 

Firstly, $\lambda_{S}$ is governed by linear SDEs in \eqref{sde_formu1}, which can be written as a general form:

\begin{equation}\label{appeq:lambda_sde}
\begin{split}
{\rm d}\boldsymbol{\Lambda}(t)&=\mathcal{H}(t)\boldsymbol{\Lambda}(t)+\mathcal{I}(t),
\end{split}
\end{equation}

where $\boldsymbol{\Lambda}(t)= \left[
     \lambda_{S}(t)  ,
     \boldsymbol{\lambda}(t)^\top 
\right]^\top$, $\mathcal{H}(t)\boldsymbol{\Lambda}(t)$ is the homogeneous part and $\mathcal{I}(t)$ is the inhomogeneous part.

Then, from the Duhamel's principle, $\boldsymbol{\Lambda}(t)$ can be written as a linear function of its initial conditions:
\begin{equation}\label{appeq:lambda_duh}
\begin{split}
\boldsymbol{\Lambda}(t)=G(t,0)\boldsymbol{\Lambda}(0)+\int^t_0G(t,\tau)\mathcal{I}(\tau),
\end{split}
\end{equation}
where $G(t,\tau)\boldsymbol{\Lambda}(\tau)$ is the solution of the homogeneous SDE:
\begin{equation}\label{appeq:lambda_homo_sde}
\begin{split}
{\rm d}\boldsymbol{\Lambda}(t)&=\mathcal{H}(t)\boldsymbol{\Lambda}(t),
\end{split}
\end{equation}
with the initial condition at time $\tau$: $\boldsymbol{\Lambda}(\tau)$.

Because $\boldsymbol{\Lambda}(t)= \left[
     \lambda_{S}(t)  ,
     \boldsymbol{\lambda}(t)^\top 
\right]^\top$, and $\boldsymbol{\Lambda}(0)= \left[
     0  ,
    \bm{\theta}^\top
\right]^\top$, $\lambda_{S}(t;\bm{\theta})$ can be written as a linear function of $\bm{\theta}$:  

\begin{equation}\label{appeq:lambda_theta}
\begin{split}
\lambda_{S}(t;\bm{\theta})&=\mathcal{W}(t) \bm{\theta}  + \lambda_{S}(t;\bm{0}),
\end{split}
\end{equation}
where $\mathcal{W}(t)$ is row vector with n elements. 

Substituting  \eqref{appeq:lambda_theta} into \eqref{appeq:fpt_dec}, we have 
\begin{equation}\label{appeq:fpt_dec_simp1}
\begin{split}
{\rm P}\left(FPT>t\right)&=\left<\sum_{\mathbf{k}} c_{\bm{k}} \exp\left(\lambda_S(t;\bm{\theta}^{\mathbf{(k)}}\right)\right>\\& =\left<\exp\left(\lambda_{S}(t;0)\right)\sum_{\mathbf{k}} c_{\bm{k}}\prod_{j=1}^N\exp((\mathcal{W}_{j}(t)+1)\theta_j^{(\mathbf{k})})\exp(-\theta_j^{(\mathbf{k})})\right>\\
&=\left<\exp\left(\lambda_{S}(t;0)\right)\sum_{\mathbf{k}} c_{\bm{k}}\prod_{j=1}^N\sum_{x_j =0}^{\infty}\frac{(\mathcal{W}_{j}(t)+1)^{x_j}(\theta_j^{(\mathbf{k})})^{x_j}}{x_j!}
\exp(-\theta_j^{(\mathbf{k})})\right>\\
&=\left<\exp\left(\lambda_{S}(t;0)\right)\sum_{x_1,x_2,\dots,x_N =0}^{\infty}\prod_{j=1}^N(\mathcal{W}_{j}(t)+1)^{x_j}\sum_{\mathbf{k}} c_{\bm{k}}\frac{(\theta_j^{(\mathbf{k})})^{x_j}}{x_j!}\exp(-\theta_j^{(\mathbf{k})})\right>\\
&=\left<\exp\left(\lambda_{S}(t;0)\right)\sum_{x_1,x_2,\dots,x_N =0}^{\infty}(\mathcal{W}(t)+1)^{\mathbf{x}}\delta_{\mathbf{Z}_0}(\mathbf{x})\right>\\
&=\left<\exp\left(\lambda_{S}(t;0)\right)(\mathcal{W}(t)+1)^{\mathbf{Z}_0}\right>\\
\end{split}
\end{equation}
where $\mathcal{W}_{j}(t)$ is the $j\operatorname{th}$ component of $\mathcal{W}(t)$.

Then theorem 1 gets proved, as \eqref{appeq:fpt_dec_simp1} is derived as
\begin{equation}
\begin{split}
{\rm P}\left(FPT>t\right)&=\left<(\mathcal{W}(t)+1)^{\mathbf{Z}_0}\exp\left(\lambda_{S}(t;\bm{0})+\mathcal{W}(t)\bm{\theta}\right)\right>|_{\bm{\theta}=\bm{0}}\\&=(\nabla_{\bm{\theta}}+1)^{\mathbf{Z}_0}\left<\left.\exp\left(\lambda_S(t;\bm{\theta})\right)\right|_{\bm{\theta}=\bm{0}}\right>
\end{split}
\end{equation}

\section{ Proof of Proposition \ref{prop:delta_poisson}}\label{Appendix:B}

In this section, we prove Proposition \ref{prop:delta_poisson}, which states that a high-dimensional delta distribution can be expressed as a weighted sum of distributions of Poisson-product form.

The proof is based on the P\'olya's theorem (Lemma \ref{applemma:polya}). 

\begin{lemma}(P\'olya's theorem :Theorem 5.3.1 in \cite{davis1963})\label{applemma:polya}
Let there be given an infinite set of linear equations in infinitely many unknowns $x_1,x_2,\dots$:
\begin{equation}\label{appeq:lineareq}
    \begin{array}{ccc}
        a_{11}x_1&+a_{12}x_2+\dots=&b_1,\\
        a_{21}x_1&+a_{22}x_2+\dots=&b_2,\\
        \vdots&&\vdots
    \end{array}
\end{equation}
No assumptions are made about $b$'s, but as far as $a$'s are concerned we assume
    
    \textbf{(A)} Let $q\geq 0$ and $n\geq 0$ be arbitrary integers. From the infinite array of coefficients
    \begin{equation}
    \begin{split}
        &a_{1,q+1},a_{1,q+2},\dots\\
        &\vdots\\
        &a_{n,q+1},a_{n,q+2},\dots\\
    \end{split}
\end{equation}
We may select $n$ columns such that the determinant formed by these columns does not vanish.

\textbf{(B)} For $i=2,3,\dots$, we have
\begin{equation}\label{appeq:assumpB}
    \begin{split}
    \lim_{j\rightarrow\infty}\frac{a_{i-1,j}}{a_{ij}}=0
    \end{split}
\end{equation}
Under assumptions (A) and (B), we may find a solution $x_i$ to \eqref{appeq:lineareq} with all the infinite series absolutely convergent.
\end{lemma}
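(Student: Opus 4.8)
The plan is to prove P\'olya's theorem constructively: reduce the infinite system to a convenient square subsystem by selecting columns via assumption (A), solve the finite truncations exactly, and then pass to the limit, using the decay condition (B) to guarantee both that the truncated solutions converge and that the resulting series are absolutely convergent. Concretely, I would first use (A) to extract, recursively, a strictly increasing sequence of column indices $j_1 < j_2 < \cdots$. Given $j_1,\dots,j_{n-1}$, assumption (A) applied with $q = j_{n-1}$ and the integer $n$ guarantees that among the columns beyond $q$ there are $n$ whose restriction to the first $n$ rows has nonvanishing determinant; this lets me choose a new column $j_n > j_{n-1}$, and \emph{crucially} I may take $j_n$ as large as I please, since (A) is assumed for every $q \ge 0$. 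Setting $x_j = 0$ for every column $j$ not of the form $j_k$ then reduces the problem to the square system $\sum_{k} a_{i,j_k}\, x_{j_k} = b_i$, whose leading $n\times n$ minors $M^{(n)} = (a_{i,j_k})_{1\le i,k\le n}$ are all invertible by construction.

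Next I would solve, for each $n$, the finite truncation consisting of the first $n$ equations in the $n$ unknowns $x_{j_1},\dots,x_{j_n}$; invertibility of $M^{(n)}$ makes this uniquely solvable, with the solution $x^{(n)}$ given by Cramer's rule as ratios of determinants in the $a_{i,j_k}$ and the $b_i$. The heart of the argument is to show that the components $x^{(n)}_{j_k}$ converge as $n\to\infty$. Here condition (B) is the essential tool: because $a_{i-1,j}/a_{i,j}\to 0$, pushing each new column $j_n$ far enough to the right (which the recursive selection permits) makes the incremental effect $a_{i,j_n}\,x_{j_n}$ of the $n$-th unknown on every already-controlled equation $i<n$ geometrically small relative to its effect $a_{n,j_n}\,x_{j_n}$ on the new equation. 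This gives the reduced array an asymptotically triangular structure, forcing the successive corrections $x^{(n+1)}-x^{(n)}$ to be dominated by a summable (geometric) sequence, so the $x^{(n)}_{j_k}$ are Cauchy and the limits $x_{j_k}$ decay rapidly.

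To finish, I would verify that the limit is a genuine solution with the required absolute convergence. For each fixed $i$, the identity $\sum_k a_{i,j_k}\, x^{(n)}_{j_k} = b_i$ holds for all $n \ge i$, and the geometric bounds of the previous step both establish $\sum_k |a_{i,j_k}\, x_{j_k}| < \infty$ and license passing to the limit termwise, yielding $\sum_k a_{i,j_k}\, x_{j_k} = b_i$. Restoring the zero values on the discarded columns then produces a solution of the full system with every row series absolutely convergent, as claimed.

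The hard part will be the quantitative step in the second paragraph, namely marrying the two hypotheses. Assumption (A) is purely qualitative (nonvanishing minors, with usable columns available arbitrarily far to the right), whereas (B) is an asymptotic decay statement, and the obstacle is to combine them into a single geometric bound valid simultaneously for all infinitely many equations. When I push $j_n$ rightward to make its disturbance on equations $1,\dots,n-1$ negligible via (B), I must at the same time keep $\det M^{(n)}$ bounded away from degeneracy, so that the Cramer ratios and the effective pivot used to solve for $x_{j_n}$ do not blow up. Reconciling "far enough right for decay" with "nondegenerate enough for solvability," uniformly in $n$, is the crux; I expect to close it by a careful recursive bookkeeping in which each $j_n$ is chosen to beat both a decay threshold (from (B), controlling rows $1,\dots,n$ together) and a nondegeneracy threshold (from (A)) at once.
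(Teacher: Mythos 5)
First, a point of reference: the paper does not prove this lemma at all. It is imported verbatim as P\'olya's theorem (Theorem 5.3.1 of Davis's book, the paper's Ref.~\cite{davis1963}) and used as a black box in Appendix~B, so there is no internal proof to compare against. Judged on its own merits, your sketch follows the classical proof strategy for P\'olya's theorem: select a sparse increasing sequence of columns, set all other unknowns to zero, solve nested finite truncations by Cramer's rule, and use (B) to push each new column far enough right that the successive corrections are summable and every row series converges absolutely. That outline is sound and, with the repair below, can be closed.

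The concrete gap is in your recursive use of (A). Applied with $q=j_{n-1}$, assumption (A) produces \emph{some} set of $n$ columns beyond $q$ whose $n\times n$ minor on rows $1,\dots,n$ is nonsingular; it does not produce a single column $j_n>j_{n-1}$ extending your previously chosen $j_1<\dots<j_{n-1}$ to a nonsingular $M^{(n)}$ --- the fresh columns need not contain the old ones, and if you switch to a fresh set at each stage you destroy the nesting on which your truncation-and-limit argument rests. The repair comes from (B), not (A): expanding the bordered determinant along the new column gives $\det M^{(n)} = a_{n,j}\bigl(\pm\det M^{(n-1)} + \sum_{i<n}\pm\, (a_{i,j}/a_{n,j})\, A_i\bigr)$ with fixed cofactors $A_i$, and since $a_{i,j}/a_{n,j} = \prod_{m=i+1}^{n} a_{m-1,j}/a_{m,j} \to 0$ by (B) (which also implicitly forces $a_{n,j}\neq 0$ for all large $j$ when $n\ge 2$), the bracket tends to $\pm\det M^{(n-1)}\neq 0$, so \emph{every} sufficiently large $j_n$ works; (A) is genuinely needed only to seed the recursion (row $1$, about which (B) says nothing). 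The same estimate dissolves the ``uniformly in $n$'' difficulty you flag as the crux: no uniform nondegeneracy is required, because at stage $n$ only finitely many already-fixed quantities (the matrix $M^{(n-1)}$, its cofactors, the entries $a_{i,j_k}$ with $i,k<n$) enter the bounds, so a single stagewise threshold for $j_n$ --- large enough that the effective pivot is essentially $a_{n,j_n}\det M^{(n-1)}$ and that $|a_{i,j_n}|/|a_{n,j_n}|$ is small enough to force the stage-$n$ corrections below $2^{-n}$ times a constant depending only on stages $1,\dots,n-1$ --- suffices, after which your limit passage and the absolute convergence of each row series go through as described. As written, however, the selection step is a non sequitur and the central quantitative estimate is deferred rather than carried out, so the proposal is an incomplete (though repairable) proof.
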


\begin{proof} Because a multidimensional delta distribution is a product of each component's one-dimensional delta distributions (\ref{appeq_b_d2p}):

\begin{equation}\label{appeq_b_d2p}
\begin{split}
&\delta_{\bm{Z}_0}=\prod_{j=1}^N\delta_{z_j}(x_j)=\sum_{k_1,k_2,\dots,k_N =0}^{\infty} c_{\bm{k}}\prod_{j=1}^N \frac{(\theta^{(k_j)}_j)^{x_j}}{x_j!}\exp(-\theta^{(k_j)}_j ),
\end{split}
\end{equation}
where $c_{\bm{k}}=\prod_{j=1}^N c_{j,k_j}$, we only need to prove that a delta distribution is a weighted sum of Poisson distributions: 
\begin{equation}\label{appeq:delta_poisson}
\begin{split}
\delta_{z}(x)
&=\sum_{k =0}^{\infty} c_{k} \frac{(\theta^{(k)})^{x}}{x!}\exp(-\theta^{(k)} ).
\end{split}
\end{equation}

To prove \eqref{appeq:delta_poisson}, we need to find a series of coefficients $c_k$ for $k=0,1,\dots$ that satisfies \eqref{appeq:delta_poisson}. 

Let $\theta^{(k)}=k+1$, then $y_k=c_k\exp(-(k+1))$. Finding coefficient series $c_k$ is then equivalent to proving the existence of unknown series for $k=0,1,\dots$ from the infinite-dimensional linear equations:
\begin{equation}\label{appeq:lineareq1}
    \begin{array}{ccc}
        a_{11}y_0&+a_{12}y_1+\dots=&\delta_z(0),\\
        a_{21}y_0&+a_{22}y_1+\dots=&\delta_z(1),\\
        \vdots&&\vdots
    \end{array}
\end{equation}
    where $a_{i,j}$ for $i,j=1,2,\dots$ are chose as 
\begin{equation}\label{appeq:linear_coe_poiss}
a_{i,j}=\frac{(j)^{i-1}}{(i-1)!}.
\end{equation}

From Lemma \ref{applemma:polya}, {$y_k$} exists as long as \eqref{appeq:lineareq1} satisfies assumptions (A) and (B).

Assumption (A) requires that the matrix $\mathcal{A}$, which is composed of $\{a_{ij}\}$ $(i=1,\dots,n, j=q,q+1,\dots,q+n-1)$, is in full rank for $n>0$ and $q>0$  arbitrary integers. 

\begin{equation}
\mathcal{A}=\left(
\begin{array}{cccc}
    1 & 1,&\dots&1  \\
     q & q+1,&\dots&q+n-1  \\
     \vdots&\vdots&\ddots&\vdots\\
     \frac{(q)^{n-1}}{(n-1)!} & \frac{(q+1)^{n-1}}{(n-1)!},&\dots&\frac{(q+n-1)^{n-1}}{(n-1)!}  \\
\end{array}
\right)=\mathcal{D}\mathcal{M},
\end{equation}
where $\mathcal{D}$ is the diagonal matrix: ${\rm diag}\{1,1,\frac{1}{2},\dots,\frac{1}{(n-1)!} \}$, and $\mathcal{M}$ is a Vandermonde matrix: 

\begin{equation}
\mathcal{M}=\left(
\begin{array}{cccc}
     1 & 1&\dots&1  \\
     q & q+1&\dots&q+n-1  \\
     \vdots&\vdots&\ddots&\vdots\\
     (q)^{n-1} & (q+1)^{n-1}&\dots&(q+n-1)^{n-1}  \\
\end{array}
\right).
\end{equation}

Because both $\mathcal{D}$ and $\mathcal{M}$ are of full rank, $\mathcal{A}$ is of full rank, satisfying assumption (A).

Assumption (B) follows naturally when we substitute \eqref{appeq:linear_coe_poiss} into \eqref{appeq:assumpB}: 
\begin{equation}
    \begin{split}
    \lim_{j\rightarrow\infty}\frac{a_{i-1,j}}{a_{ij}}=\lim_{j\rightarrow\infty}\frac{(i-1)}{j}=0.
    \end{split}
\end{equation}
\end{proof}
\end{widetext}

\end{document}